\newtheorem{theorem}{Theorem}[section]
\newtheorem{lemma}[theorem]{Lemma}
\newtheorem{corollary}[theorem]{Corollary}
\theoremstyle{definition}
\newtheorem{definition}[theorem]{Definition}
\newtheorem{proposition}[theorem]{Proposition}
\theoremstyle{remark}
\numberwithin{equation}{section}
\newcommand{\qu}{\mathbf{q}}
\begin{document}
\title[Finite frames]{Discrete frames on finite dimensional left quaternion Hilbert spaces}
\author{M. Khokulan$^1$, K. Thirulogasanthar$^2$ and S. Srisatkunarajah$^1$}
\address{$^{1}$ Department of Mathematics and Statistics, University of Jaffna, Thirunelveli, Jaffna, Sri Lanka. }
\address{$^{2}$ Department of Computer Science and Software Engineering, Concordia University, 1455 de Maisonneuve Blvd. West, Montreal, Quebec, H3G 1M8, Canada.}
\email{mkhokulan@gmail.com, santhar@gmail.com}
\thanks{This research is part of an M.Phil thesis to be submitted to University of Jaffna }
\subjclass{Primary 81R30, 46E22}
\date{\today}
\keywords{Frames, quaternions, quaternion Hilbert spaces}
\begin{abstract}
An introductory theory of frames on finite dimensional left quaternion Hilbert spaces is demonstrated along the lines of their complex counterpart. 
\end{abstract}
\maketitle
\pagestyle{myheadings}
\section{Introduction}\label{sec_intro}
Duffin and Schaeffer invented frames while working on non-harmonic Fourier series \cite{DU}. Later, despite the fact that many others investigated the frame theory, the mechanism brought by Daubechies et al. gave a strong place to frames in harmonic analysis \cite{D1,D2}.  Coherent states of quantum optics and wavelets are subclasses of continuous frames \cite{Alibk}. In the modern era, frames became standard tool in many areas of engineering and physical problems partly because of their success in digital signal processing, particularly one can name the time-frequency analysis \cite{Ole, GR}. In this note we are primarily interested in frames on finite dimensional quaternion Hilbert spaces. There has been a constant surge in finding  finite tight frames, largely as a result of several important applications such as internet coding, wireless communication, quantum detection theory, and many more \cite{CG, CGV, GO, Ole, TW}. There is no fixed class of frames which is appropriate for all physical problems. Each problem is different and a solution to it demands a specific frame. The current nature of technology advancement constantly faces new problems, therefore, the research to find tools to solve them will continue.

Any vector in a separable Hilbert space can be written in terms of an orthonormal basis of the Hilbert space and the so-written expression is unique. Besides, finding an orthonormal basis for a separable Hilbert space is hard, the uniqueness of the expression of vectors, in terms of an orthonormal basis, is undesirable in applications. In fact, this uniqueness limited flexibility in applications. In order to overcome this strain practitioners looked for a substitute. At this juncture, frames appeared as a replacement to orthonormal bases. In a finite dimensional Hilbert space, usually, a frame contains more vectors than any orthonormal basis and this surplus allowed vectors to have enormously many expressions. This flexibility of frames is the key to their success in applications. How flexible a frame should be varies according to the nature of each problem. Flexibility of frames provides design freedom and which permits one to build frames adequate to a particular problem in a way that is not possible by an orthonormal basis \cite{Alibk, Ole, D1,A1}.\\

We can only define Hilbert spaces over the fields $\mathbb{R}, \mathbb{C}$ and $H$, which are the set of all real numbers, the set of all complex numbers, and the set of all quaternions respectively \cite{Ad}. The fields $\mathbb{R}$ and $\mathbb{C}$ are commutative and associative. Functional analytic properties of real and complex Hilbert spaces well-studied. However, the field of quaternions is non-commutative but associative. Due to this non-commutativity of quaternions, a systematic study of functional analytic properties of quaternionic Hilbert spaces is not concluded yet. Further, as a result of this non-commutativity we have two types of Hilbert spaces, namely left quaternionic Hilbert space and right quaternionic Hilbert space. The functional analytic properties of the underlying Hilbert space are crucial to the study of frames. In this regard, we shall examine some functional analytic properties of finite dimensional quaternion Hilbert spaces as required for the development of the frame theory.

As far as we know a general theory of discrete frames on quaternionic Hilbert spaces is not available in the literature. In this respect we shall construct discrete frames on finite dimensional left quaternionic Hilbert spaces following the lines of \cite{Ole}. In the construction of frames on finite dimensional quaternion Hilbert spaces the effect of non-commutativity of quaternions is adaptable, and therefore most of the results can be manipulated from their complex counterparts. Finally, as a scope of the construction, four dimensional quaternions may provide more feasibility in applications than its two dimensional complex counterpart. 

\section{quaternion Algebra}
In this section we shall define quaternions and some of their properties as needed here. For details one may consult \cite{Ad, TH, ThiAli}.
\subsection{Quaternions}
Let $H$ denote the field of quaternions. Its elements are of the form $\qu=x_0+x_1i+x_2j+x_3k,~$ where $x_0,x_1,x_2$ and $x_3$ are real numbers, and $i,j,k$ are imaginary units such that $i^2=j^2=k^2=-1$, $ij=-ji=k$, $jk=-kj=i$ and $ki=-ik=j$. The quaternionic conjugate of $\qu$ is defined to be $\overline{\qu} = x_0 - x_1i - x_2j - x_3k$.\\
\subsection{Properties of Quaternions:}
The quaternion product allows the following properties.
For $q,r,s\in H,$ we have 
\begin{enumerate}
	\item [(a)] $q(rs)=(qr)s$ (associative)
	\item [(b)]$q(r+s)=qr+qs.$
	\item [(c)] For each $q\neq 0,$ there exists $r$ such that $qr=1$	
	\item [(d)] If $qr=qs$ then $r=s$ whenever $q\not=0$
\end{enumerate}
The quaternion product is not commutative.

\section{Frames in quaternion Hilbert space}

\begin{definition}\label{D1}\cite{Ad}
Let $V_{H}^{L}$ is a vector space under left multiplication by quaternionic scalars, where $H$ stands for the quaternion algebra. For $f,g,h\in V_{H}^{L}$ and $q\in H,$ the inner product
\begin{eqnarray*}
\left\langle .|.\right\rangle:V_{H}^{L}\times V_{H}^{L}\longrightarrow H
\end{eqnarray*}
satisfies the following properties:
\begin{enumerate}
	\item [(a)]$\overline{\left\langle f|g\right\rangle}=\left\langle g|f\right\rangle$
	\item [(b)]$\left\|f\right\|^{2}=\left\langle f|f\right\rangle>0~$ unless $~f=0,~$ a real norm
	\item [(c)]$\left\langle f|g+h\right\rangle=\left\langle f|g\right\rangle+\left\langle f|h\right\rangle$
	\item [(d)]$\left\langle \textbf{q}f|g\right\rangle=\textbf{q}\left\langle f|g\right\rangle$
	\item[(e)] $\left\langle f|\textbf{q}g\right\rangle=\left\langle f|g\right\rangle\overline{\textbf{q}}.$
\end{enumerate}
\end{definition}
Assume that the space $V_{H}^{L}$  together with $\left\langle .|.\right\rangle$ is a separable Hilbert space. Properties of left quaternion Hilbert spaces as needed here can be listed as follows: 
For $f,g\in V_{H}^{L}$ and $p,q\in H,$ we have
\begin{enumerate}
	\item [(a)] $pf+qg\in V_{H}^{L}$
	\item[(b)] $p(f+g)=pf+qg$
	\item[(c)] $(pq)f=p(qf)$
	\item [(d)]$(p+q)f=pf+qf.$
\end{enumerate}
\begin{proposition}\label{P1}\cite{Ad} (Schwartz inequality)
$\left\langle f|g\right\rangle\left\langle g|f\right\rangle\leq\left\|f\right\|^{2}\left\|g\right\|^{2},\text{for all}~ f,g\in V_{H}^{L}$
\end{proposition}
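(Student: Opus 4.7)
The natural approach is the quaternionic analogue of the standard Hilbert-space proof: for $g \neq 0$ and a well-chosen scalar $\lambda \in H$, expand $0 \le \|f - \lambda g\|^{2}$ and extract the desired inequality. Because $V_H^L$ is a \emph{left} quaternion Hilbert space, the only sensible way to form a scalar multiple inside the inner product is $\lambda g$, and I must respect the order of quaternionic scalars throughout the expansion.

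First I dispose of the case $g = 0$, where both sides vanish by property (b). Assume $g \neq 0$ and set
\[
  \lambda \;=\; \|g\|^{-2}\,\langle f \mid g\rangle \;\in\; H,
\]
which is meaningful because $\|g\|^{2}$ is a strictly positive \emph{real} number and therefore commutes with every quaternion. Then $\overline{\lambda} = \langle g \mid f\rangle\,\|g\|^{-2}$ by the conjugate-linearity rule (a). Using (c), (d), (e) I expand
\[
  0 \;\le\; \langle f - \lambda g \mid f - \lambda g\rangle
        \;=\; \|f\|^{2} \;-\; \lambda\,\langle g \mid f\rangle
              \;-\; \langle f \mid g\rangle\,\overline{\lambda}
              \;+\; \lambda\,\langle g \mid g\rangle\,\overline{\lambda}.
\]
Here I am careful: property (d) gives $\langle \lambda g \mid f\rangle = \lambda\langle g\mid f\rangle$ (scalar on the left), while (e) gives $\langle f \mid \lambda g\rangle = \langle f\mid g\rangle\,\overline{\lambda}$ (conjugate scalar on the right). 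This is the one spot where non-commutativity could create trouble, so it is worth writing out explicitly.

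Substituting the choice of $\lambda$ and using that $\|g\|^{2}$ is real (so it may be moved freely past any quaternion), each of the last three terms becomes $\|g\|^{-2}\,\langle f\mid g\rangle\langle g\mid f\rangle$; with the signs in the expansion above, two cancel one and I obtain
\[
  0 \;\le\; \|f\|^{2} \;-\; \|g\|^{-2}\,\langle f\mid g\rangle\langle g\mid f\rangle.
\]
Multiplying through by the positive real number $\|g\|^{2}$ gives the claim. Note that $\langle f\mid g\rangle\langle g\mid f\rangle = \langle f\mid g\rangle\overline{\langle f\mid g\rangle} = |\langle f\mid g\rangle|^{2}$ is a non-negative real number by (a), so the inequality is between real numbers and the symbol $\le$ makes sense.

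The main obstacle is conceptual rather than computational: one must resist the temptation to cancel or reorder factors of $\lambda$ and $\overline{\lambda}$ against quaternionic inner products, since they do not commute in general. The proof works only because the particular scalar combinations appearing — $\|g\|^{2}$ and $|\langle f\mid g\rangle|^{2}$ — turn out to be real, which is where the identity $\bar{q}q \in \mathbb{R}_{\ge 0}$ and properties (a), (d), (e) conspire to make the quaternionic argument mimic the complex one.
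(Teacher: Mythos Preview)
Your proof is correct and follows essentially the same classical strategy as the paper: expand the squared norm of a suitable quaternionic combination of $f$ and $g$, track the left/right positions of scalars via properties (d) and (e), and exploit that $\|g\|^{2}$ and $\langle f\mid g\rangle\langle g\mid f\rangle$ are real to collapse the cross terms. The only cosmetic difference is that the paper works with $\langle pf-qg\mid pf-qg\rangle$ and substitutes $p=\langle g\mid g\rangle$, $q=\langle f\mid g\rangle$ (thereby avoiding the separate $g=0$ case), whereas you normalize with a single $\lambda=\|g\|^{-2}\langle f\mid g\rangle$; the algebra is the same up to a real factor of $\|g\|^{4}$.
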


For an enhanced explanation of quaternions and quaternion Hilbert spaces one may consult \cite{Ad, ThiAli} and the many references listed there.
\subsection{Some basic facts in left quaternion Hilbert space}
\begin{definition}\label{D2}(\textit{Basis})
Let $V^{L}_{H}$ be a finite dimensional left quaternion Hilbert space with an inner product $\left\langle .| .\right\rangle.$ The linearity is assumed in the second entry of the inner product.
For $V^{L}_{H}$ a sequence $\left\{e_{k}\right\}^{m}_{k=1}$ in $V^{L}_{H}$ is called a basis if the sequence satisfies the following two conditions
\begin{enumerate}
	\item $~V^{L}_{H}=\text{left span}\left\{e_{k}\right\}^{m}_{k=1};~$
	\item $~\left\{e_{k}\right\}^{m}_{k=1}$ is a linearly independent set.
\end{enumerate}

\end{definition}
As a consequence of this definition, for every $f\in V^{L}_{H}$ there exist unique scalar coefficients $\left\{c_{k}\right\}^{m}_{k=1}\subseteq H$ such that $f=\displaystyle\sum_{k=1}^{m}c_{k}e_{k}.$ 
If $\left\{e_{k}\right\}^{m}_{k=1}$ is an orthonormal basis, that is $\left\langle e_{k}|e_{j}\right\rangle=\delta_{kj},$ then 
$\left\langle f|e_{j}\right\rangle=\left\langle \displaystyle\sum_{k=1}^{m}c_{k}e_{k}|e_{j}\right\rangle=\displaystyle\sum_{k=1}^{m}c_{k}\left\langle e_{k}|e_{j}\right\rangle=c_{j}.$
Thereby
\begin{equation}\label{eq4}
	f=\displaystyle\sum_{k=1}^{m}\left\langle f|e_{k}\right\rangle e_{k}.
\end{equation}
We now introduce frames on finite dimensional left quaternion Hilbert spaces. We shall show that the complex treatment adapt to the quaternions as well. In this note {\it left span} means left span over the quaternion scalar field, $H$. We shall also prove the functional analytic properties for quaternions as needed here, and these proofs are the adaptation of the proofs of the complex cases given in \cite{Role}. The theory of frames offered here, more or less, follows the lines of \cite{Ole}.
\begin{definition}\label{D3}(\textit{Frames})
A countable family of elements $\left\{f_{k}\right\}_{k\in I}$ in $V^{L}_{H}$ is a frame for $V^{L}_{H}$ if there exist constants $A,B>0$ such that
\begin{equation}\label{eq5}
A\left\|f\right\|^{2}\leq\displaystyle\sum_{k\in I}\left|\left\langle f|f_{k}\right\rangle\right|^{2}\leq B\left\|f\right\|^{2},
\end{equation}
$~\text{for all}~ f\in V^{L}_{H}.$
\end{definition} 
The real numbers $A$ and $B$ are called frame bounds. The numbers $A$ and $B$ are not unique. The supremum over all lower frame bounds is the \textit{optimal lower frame bound }, and the  infimum over all upper frame bounds is the \textit{optimal upper frame bound} \cite{Ole}. In fact, optimal frame bounds are the frame bounds.
A frame is said to be normalized if $\left\|f_{k}\right\|=1,~\mbox{~~\text{for all}~} k\in I.$ 
In this note we shall only consider finite frames $\left\{f_{k}\right\}^{m}_{k=1},~m\in \mathbb{N.}$
With this restriction, Schwartz inequality shows that
\begin{equation}\label{eq6}
\sum_{k=1}^{m}\left|\left\langle f|f_{k}\right\rangle\right|^{2}\leq\sum_{k=1}^{m}\left\|f_{k}\right\|^{2}\left\|f\right\|^{2},
\quad \text{for all}~ f\in V^{L}_{H}.
\end{equation}
From (\ref{eq6}) it is clear that the upper frame condition is always satisfied with
 $A=\sum_{k=1}^{m}\left\|f_{k}\right\|^{2}$. In order for the lower condition in (\ref{eq5}) to be satisfied, it is necessary that $\text{leftspan}\{f_k\}_{k=1}^{m}=V_H^L$. Let us see this in the following.

\begin{lemma}\label{l1}
Let $\left\{f_{k}\right\}^{m}_{k=1}$ be a sequence in $V^{L}_{H}$ and $W:=left span\left\{f_{k}\right\}^{m}_{k=1}.$ Then a mapping $\varphi:W\longrightarrow \mathbb{R}$ is continuous if and only if for any sequence $\left\{x_{n}\right\}$ in $W$ which converges to $x_{0}$ as $n\rightarrow \infty$ then $\varphi(x_{n})$ converges to  $\varphi(x_{0})$ as $n\rightarrow \infty.$
\end{lemma}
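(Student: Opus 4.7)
The statement is the standard equivalence between $\epsilon$--$\delta$ continuity and sequential continuity for a real-valued function on a metric space. Since $W$ is a subset of the left quaternion Hilbert space $V_H^L$, it inherits the real-valued norm $\|\cdot\|$ from the inner product (Definition \ref{D1}(b)), so distances $\|x-y\|$ make sense in $W$ and $\mathbb{R}$ is metric as usual. The plan is simply to run the classical metric-space argument, checking at each step that nothing requires commutativity of the scalars.

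For the forward direction, I would assume $\varphi$ continuous at a point $x_0\in W$ and let $x_n\to x_0$ in $W$. Given $\eps>0$, by continuity there is $\delta>0$ with $|\varphi(x)-\varphi(x_0)|<\eps$ whenever $\|x-x_0\|<\delta$. Since $\|x_n-x_0\|\to 0$, eventually $\|x_n-x_0\|<\delta$, hence $|\varphi(x_n)-\varphi(x_0)|<\eps$, proving $\varphi(x_n)\to\varphi(x_0)$.

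For the converse, I would argue by contrapositive: assume $\varphi$ is not continuous at some $x_0\in W$. Then there exists $\eps_0>0$ such that for every $\delta>0$ one can find $y\in W$ with $\|y-x_0\|<\delta$ and $|\varphi(y)-\varphi(x_0)|\geq\eps_0$. Choosing $\delta=1/n$ for each $n\in\mathbb{N}$, I obtain a sequence $\{x_n\}\subset W$ with $\|x_n-x_0\|<1/n$ and $|\varphi(x_n)-\varphi(x_0)|\geq\eps_0$. Then $x_n\to x_0$ while $\varphi(x_n)\not\to\varphi(x_0)$, contradicting the sequential hypothesis.

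There is essentially no serious obstacle: both directions go through verbatim as in the complex case because the only structure used is the real-valued norm on $W$ and the real metric on $\mathbb{R}$, and Definition \ref{D1}(b) guarantees the former. One small bookkeeping remark worth inserting is that $W=\text{left span}\{f_k\}_{k=1}^m$ is closed under the operations needed to form the sequences $x_n$ (it is a left $H$-subspace), so staying inside $W$ when extracting the witnessing sequence in the contrapositive argument causes no issue. The lemma will then be used in the sequel to transfer norm-continuity statements to sequential arguments when establishing the lower frame bound, which motivates recording it here.
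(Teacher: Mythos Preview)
Your proposal is correct and follows essentially the same approach as the paper: both directions are the standard $\epsilon$--$\delta$ versus sequential continuity argument, with the converse handled by taking $\delta=1/n$ to build a witnessing sequence. The paper's proof is organized identically, so there is no substantive difference to report.
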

\begin{proof}Similar to the complex case.
\end{proof}
\begin{lemma}\label{l2}
Let $\varphi:W\longrightarrow \mathbb{R}$ be a continuous mapping and $M$ is a compact subset of $W,$ then $\varphi$ assumes a maximum and a minimum at some points of $M.$
\end{lemma}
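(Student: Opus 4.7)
The plan is to adapt the classical extreme value theorem argument, leveraging Lemma \ref{l1} to pass freely between the topological and sequential formulations of continuity, and using the sequential compactness of $M$ (which in the finite-dimensional setting of $W$ is equivalent to compactness via Bolzano--Weierstrass).

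First I would establish that $\varphi(M)$ is bounded above. Arguing by contradiction, suppose it is not; then for each $n\in\mathbb{N}$ one can pick $x_n\in M$ with $\varphi(x_n)>n$. By sequential compactness of $M$, extract a subsequence $\{x_{n_k}\}$ converging to some $x_0\in M$. Applying Lemma \ref{l1}, $\varphi(x_{n_k})\longrightarrow \varphi(x_0)\in\mathbb{R}$, contradicting $\varphi(x_{n_k})>n_k\to\infty$. The same reasoning applied to $-\varphi$ (or directly to lower bounds) shows $\varphi(M)$ is bounded below, so $\alpha:=\sup\varphi(M)$ and $\beta:=\inf\varphi(M)$ are finite real numbers.

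Next I would show $\alpha$ is attained. By the defining property of the supremum, choose a maximizing sequence $\{x_n\}\subseteq M$ with $\varphi(x_n)\longrightarrow \alpha$. Using sequential compactness of $M$ once more, pass to a subsequence $\{x_{n_k}\}$ with $x_{n_k}\longrightarrow x_0\in M$. Lemma \ref{l1} then yields $\varphi(x_{n_k})\longrightarrow \varphi(x_0)$, while the same subsequence satisfies $\varphi(x_{n_k})\longrightarrow \alpha$. By uniqueness of limits in $\mathbb{R}$, $\varphi(x_0)=\alpha$, so the maximum is attained at $x_0\in M$. The identical argument, with a minimizing sequence converging to $\beta$, produces a point of $M$ at which $\varphi$ attains its minimum.

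The main obstacle is essentially terminological rather than substantive: one must be sure that the notion of \emph{compact} being used affords sequential compactness, since Lemma \ref{l1} is framed sequentially. In the present setting $W=\text{left span}\{f_k\}_{k=1}^{m}$ is finite-dimensional over $H$ (hence, regarded as a real vector space, finite-dimensional as well), so closed bounded subsets are sequentially compact and the equivalence of the two notions of compactness is available without further work. Once this is acknowledged, the proof reduces to the two-step extract-and-identify argument above.
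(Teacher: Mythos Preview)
Your argument is correct and uses the same essential ingredients as the paper --- sequential compactness of $M$ together with Lemma~\ref{l1} --- so the approaches coincide in substance. The only organizational difference is that the paper first shows $\varphi(M)$ is (sequentially) compact, hence closed and bounded in $\mathbb{R}$, and then reads off that $\inf\varphi(M)$ and $\sup\varphi(M)$ lie in $\varphi(M)$, whereas you split this into a separate boundedness step and a separate attainment step; this is a stylistic rather than mathematical distinction.
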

\begin{proof} Similar to the complex case.
\end{proof}

\begin{proposition}\label{P2}
Let $\left\{f_{k}\right\}^{m}_{k=1}$ be a sequence in $V^{L}_{H}.$ Then $\left\{f_{k}\right\}^{m}_{k=1}$ is a frame for $\text{\textit{left span}}\left\{f_{k}\right\}^{m}_{k=1}.$
\end{proposition}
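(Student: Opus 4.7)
The upper bound in \eqref{eq5} is already furnished by inequality \eqref{eq6}, so the whole content of the proposition lies in producing a strictly positive lower bound $A$ on the set $W:=\text{left span}\{f_k\}_{k=1}^{m}$. The plan is to define the functional
\[
\varphi:W\longrightarrow \mathbb{R},\qquad \varphi(f)=\sum_{k=1}^{m}\left|\left\langle f|f_{k}\right\rangle\right|^{2},
\]
and to extract $A$ as the minimum of $\varphi$ on the unit sphere $S:=\{f\in W:\|f\|=1\}$. Since $W$ is finite dimensional (being the left span of finitely many vectors), $S$ is closed and bounded and therefore compact, so once the continuity of $\varphi$ is in hand, Lemma \ref{l2} immediately provides a minimizer $f_{0}\in S$ with $A:=\varphi(f_{0})=\min_{f\in S}\varphi(f)$.

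Continuity of $\varphi$ on $W$ is checked via Lemma \ref{l1}: for any sequence $x_n\to x_0$ in $W$, the Schwartz inequality (Proposition \ref{P1}) gives $|\langle x_n|f_k\rangle-\langle x_0|f_k\rangle|=|\langle x_n-x_0|f_k\rangle|\le\|x_n-x_0\|\,\|f_k\|\to 0$ for each $k$, and summing the squares of the moduli yields $\varphi(x_n)\to\varphi(x_0)$. I would then verify $A>0$ by contradiction. If $A=0$, then $\langle f_0|f_k\rangle=0$ for every $k\in\{1,\dots,m\}$. But $f_0\in W$ is itself a left combination $f_0=\sum_{k=1}^{m}c_k f_k$ with $c_k\in H$, so by properties (c) and (e) of Definition \ref{D1},
\[
\|f_0\|^{2}=\langle f_0|f_0\rangle=\Bigl\langle f_0\Bigm|\sum_{k=1}^{m}c_k f_k\Bigr\rangle=\sum_{k=1}^{m}\langle f_0|f_k\rangle\,\overline{c_k}=0,
\]
contradicting $\|f_0\|=1$. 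Hence $A>0$.

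Finally, for an arbitrary nonzero $f\in W$, apply the minimum property to $f/\|f\|\in S$; using property (d) of Definition \ref{D1} we have $\langle \alpha f|f_k\rangle=\alpha\langle f|f_k\rangle$ for the real scalar $\alpha=1/\|f\|$, so
\[
A\le\varphi\!\left(\frac{f}{\|f\|}\right)=\frac{1}{\|f\|^{2}}\sum_{k=1}^{m}\left|\left\langle f|f_{k}\right\rangle\right|^{2},
\]
which rearranges to the lower frame bound $A\|f\|^{2}\le\sum_{k=1}^{m}|\langle f|f_k\rangle|^{2}$. Combining with \eqref{eq6} gives a valid upper bound $B=\sum_{k=1}^{m}\|f_k\|^{2}$, and the proof is complete.

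The only substantive obstacle is the strict positivity of $A$, and in the complex theory this rests on the elementary fact that a vector orthogonal to a spanning set is zero. In the left quaternionic setting the same conclusion holds, but one has to be careful that the conjugates $\overline{c_k}$ produced by property (e) of the inner product do not prevent the sum from collapsing; as shown above, they cause no harm because each $\langle f_0|f_k\rangle$ already vanishes. The remaining ingredients---continuity and attainment of extrema on a compact set---are exactly the content of Lemmas \ref{l1} and \ref{l2}, which have been prepared in advance for this purpose.
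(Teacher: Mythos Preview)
Your proof follows essentially the same route as the paper's: define the functional $\varphi$, establish its continuity via Lemma \ref{l1}, extract the lower bound as the minimum of $\varphi$ on the (compact) set of unit vectors in $W$ using Lemma \ref{l2}, and then normalize an arbitrary nonzero $f$. Your verification that $A>0$ is in fact more careful than the paper's, which simply asserts ``$A>0$ as not all $f_k$ are zero''; your contradiction argument spells out exactly why a unit vector in $W$ orthogonal to every $f_k$ cannot exist.
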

\begin{proof}
From the Schwartz inequality the upper frame condition is satisfied with $B=\displaystyle\sum_{k=1}^{m}\left\|f_{k}\right\|^{2}.$
Thereby
\begin{equation}\label{eq12} 
\displaystyle\sum_{k=1}^{m}\left|\left\langle f|f_{k}\right\rangle\right|^{2}\leq B\left\|f\right\|^{2}.
\end{equation}
Let $W:=left span\left\{f_{k}\right\}^{m}_{k=1}$ and consider the mapping
\begin{eqnarray*}
\varphi:W\longrightarrow \mathbb{R}, ~\varphi(f):=\displaystyle\sum_{k=1}^{m}\left|\left\langle f,f_{k}\right\rangle\right|^{2}.
\end{eqnarray*}
Now we want to prove that $\varphi$ is continuous.
Let $\left\{g_{n}\right\}$ be a sequence in $W$ such that $g_{n}\longrightarrow g$ as $n\longrightarrow\infty.$
Now,
\begin{eqnarray*}
\left\|\varphi(g_{n})-\varphi(g)\right\|&=&\left\|\sum_{k=1}^{m}\left|\left\langle g_{n}|f_{k}\right\rangle\right|^{2}-\sum_{k=1}^{m}\left|\left\langle g|f_{k}\right\rangle\right|^{2}\right\|\\
&\leq&\sum_{k=1}^{m}\left\|\left|\left\langle g_{n}|f_{k}\right\rangle\right|^{2}-\left|\left\langle g|f_{k}\right\rangle\right|^{2}\right\|\\
&=&\sum_{k=1}^{m}\left\|\left\langle g_{n}|f_{k}\right\rangle\overline{\left\langle g_{n}|f_{k}\right\rangle}-\left\langle g|f_{k}\right\rangle\overline{\left\langle g|f_{k}\right\rangle}\right\|\\
&=&\sum_{k=1}^{m}\left\|\left\langle g_{n}|f_{k}\right\rangle\left\langle f_{k}|g_{n}\right\rangle-\left\langle g|f_{k}\right\rangle\left\langle f_{k}|g\right\rangle\right\|\\
&=&\sum_{k=1}^{m}\left\|\left\langle g_{n}|f_{k}\right\rangle\left\langle f_{k}|g_{n}\right\rangle-\left\langle g|f_{k}\right\rangle\left\langle f_{k}|g_{n}\right\rangle+\left\langle g|f_{k}\right\rangle\left\langle f_{k}|g_{n}\right\rangle-\left\langle g|f_{k}\right\rangle\left\langle f_{k}|g\right\rangle\right\|\\
&=&\sum_{k=1}^{m}\left\|\left(\left\langle g_{n}|f_{k}\right\rangle-\left\langle g|f_{k}\right\rangle\right)\left\langle f_{k}|g_{n}\right\rangle+\left\langle g|f_{k}\right\rangle\left(\left\langle f_{k}|g_{n}\right\rangle- \left\langle f_{k}|g\right\rangle\right)\right\|\\
&=&\sum_{k=1}^{m}\left\|\left(\overline{\left\langle f_{k}| g_{n}\right\rangle}-\overline{\left\langle f_{k}|g\right\rangle}\right)\left\langle f_{k}|g_{n}\right\rangle+\left\langle g|f_{k}\right\rangle\left(\left\langle f_{k}|g_{n}\right\rangle- \left\langle f_{k}|g\right\rangle\right)\right\|\\
&=&\sum_{k=1}^{m}\left\|\left(\overline{\left\langle f_{k}| g_{n}\right\rangle-\left\langle f_{k}|g\right\rangle}\right)\left\langle f_{k}|g_{n}\right\rangle+\left\langle g|f_{k}\right\rangle\left(\left\langle f_{k}|g_{n}\right\rangle- \left\langle f_{k}|g\right\rangle\right)\right\|\\
&=&\sum_{k=1}^{m}\left\|\overline{\left\langle f_{k}| g_{n}-g\right\rangle}\left\langle f_{k}|g_{n}\right\rangle+\left\langle g|f_{k}\right\rangle\left(\left\langle f_{k}|g_{n}-g\right\rangle\right)\right\|\\
&\longrightarrow&0\mbox{~~as~}n\longrightarrow \infty~~\left\lceil \because g_{n}\longrightarrow g \mbox{~as~} n\longrightarrow \infty \right\rceil
\end{eqnarray*}
Thereby $\varphi(g_{n})$ converges to $\varphi(g )$ as $ n\longrightarrow\infty.$
From the lemma (\ref{l1}), $\varphi$ is continuous.
Since the closed unit ball in $W$ is compact, from the lemma (\ref{l2}), we can find $g\in W$ with $\left\|g\right\|=1$ such that
\begin{eqnarray*}
A:=\sum_{k=1}^{m}\left|\left\langle g|f_{k}\right\rangle\right|^{2}=inf\left\{\displaystyle\sum_{k=1}^{m}\left|\left\langle f|f_{k}\right\rangle\right|^{2}:f\in W,\left\|f\right\|=1\right\}.
\end{eqnarray*}
It is clear that $A>0$ as not all $f_{k}$ are zero.
Now given $f\in W,~f\neq 0,~$ we have $\left\|\displaystyle\displaystyle\frac{f}{ \left\|f\right\|}\right\|=1,$ so $\displaystyle\sum_{k=1}^{m}\left|\left\langle \displaystyle\displaystyle\frac{f}{\left\|f\right\|}|f_{k}\right\rangle\right|^{2}\geq A.$
Hence
\begin{eqnarray*}
\displaystyle\sum_{k=1}^{m}\left|\left\langle f|f_{k}\right\rangle\right|^{2}=\displaystyle\sum_{k=1}^{m}\left|\left\langle \displaystyle\displaystyle\frac{f}{\left\|f\right\|}|f_{k}\right\rangle\right|^{2}\left\|f\right\|^{2}\geq A\left\|f\right\|^{2}.
\end{eqnarray*}
Thereby
\begin{equation}\label{eq13}
	\displaystyle\sum_{k=1}^{m}\left|\left\langle f|f_{k}\right\rangle\right|^{2}\geq A\left\|f\right\|^{2}.
\end{equation}
 From (\ref{eq12}) and (\ref{eq13}) ,
\begin{eqnarray*}
A\left\|f\right\|^{2}\leq\displaystyle\sum_{k=1}^{m}\left|\left\langle f|f_{k}\right\rangle\right|^{2}\leq B\left\|f\right\|^{2},~\text{for all ~}f\in W .
\end{eqnarray*}
Hence $\left\{f_{k}\right\}^{m}_{k=1}$ is a frame for $left span\left\{f_{k}\right\}^{m}_{k=1}.$
\end{proof}

\begin{corollary}\label{c1}
A family of elements $\left\{f_{k}\right\}^{m}_{k=1}$ in $V^{L}_{H}$ is a frame for $V^{L}_{H}$ if and only if $left span\left\{f_{k}\right\}^{m}_{k=1}=V^{L}_{H}.$
\end{corollary}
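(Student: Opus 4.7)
The plan is to prove the two directions separately, with the reverse implication being essentially a free consequence of Proposition \ref{P2} and the forward implication reducing to a standard orthogonal-complement argument.

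For the ($\Leftarrow$) direction, I would simply specialize Proposition \ref{P2}. If $\text{left span}\{f_k\}_{k=1}^m = V_H^L$, then $W$ in that proposition is all of $V_H^L$, so the proposition produces constants $A,B>0$ with
\begin{equation*}
A\|f\|^2 \le \sum_{k=1}^m |\langle f|f_k\rangle|^2 \le B\|f\|^2 \quad\text{for all } f\in V_H^L,
\end{equation*}
which is exactly the frame condition of Definition \ref{D3}.

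For the ($\Rightarrow$) direction, I would argue by contradiction. Suppose $\{f_k\}_{k=1}^m$ is a frame for $V_H^L$ but $W:=\text{left span}\{f_k\}_{k=1}^m$ is a proper subspace. The idea is to produce a nonzero $f\in V_H^L$ with $\langle f|f_k\rangle=0$ for every $k$; then the lower frame bound would force $A\|f\|^2\le 0$, contradicting $f\neq 0$ and $A>0$. To build such an $f$, I would take a basis of $W$, orthonormalize it by the quaternionic Gram--Schmidt procedure (which works verbatim over $H$ since the only properties used are conjugate-linearity in the second slot and strict positivity of $\langle\cdot|\cdot\rangle$ from Definition \ref{D1}), and then extend it to an orthonormal basis $\{e_1,\dots,e_n\}$ of $V_H^L$ using the formula \eqref{eq4}. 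Since $W\subsetneq V_H^L$ we have a basis vector $e_j$ with $j$ beyond the orthonormal basis of $W$; this $e_j$ is a unit vector orthogonal to every element of $W$, in particular to each $f_k$, producing the desired contradiction.

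The main obstacle is verifying that the Gram--Schmidt construction indeed goes through on a left quaternion Hilbert space, so that an orthonormal basis of $W$ can be extended to one of $V_H^L$ and a nonzero orthogonal vector exists. Once one writes out the quaternionic projection $f - \sum_{i<j}\langle f|e_i\rangle e_i$ and checks, using properties (d) and (e) of the inner product, that the resulting vectors remain orthonormal, the rest of the proof is bookkeeping; the non-commutativity is harmless here precisely because the inner product absorbs scalars on the left and their conjugates on the right in compatible positions.
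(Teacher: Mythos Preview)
Your proposal is correct. The reverse direction ($\Leftarrow$) is handled exactly as in the paper, by invoking Proposition~\ref{P2}. For the forward direction ($\Rightarrow$) you take a genuinely different route: you produce a nonzero vector orthogonal to $W=\text{left span}\{f_k\}$ via quaternionic Gram--Schmidt and then use the \emph{lower} frame bound to force $A\|f\|^2\le 0$, a contradiction. The paper instead argues that if $f\notin W$ then $\|f\|^2\neq\sum_k|c_k|^2\|f_k\|^2$ for every choice of scalars, and then specializes $c_k=\langle f/\sqrt{B}\mid f_k/\|f_k\|\rangle$ to contradict the \emph{upper} frame bound. Your orthogonal-complement argument is the standard one and is cleaner: it isolates precisely the obstruction (a vector with all frame coefficients zero) and requires no auxiliary norm identity, at the modest cost of checking that Gram--Schmidt survives the passage to left quaternion scalars. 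The paper's argument avoids Gram--Schmidt but leans on the implication ``$f\neq\sum c_kf_k\Rightarrow\|f\|^2\neq\sum|c_k|^2\|f_k\|^2$,'' which as written is not justified and in fact can fail, so your approach is on firmer ground.
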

\begin{proof}
Suppose that $\{f_k\}_{k=1}^{m}$ is a frame for $V_H^L$. Then there exist $A,B>0$ such that
\begin{equation}\label{ee1}
A\|f\|^2\leq\sum_{k=1}^{m}\left\vert\langle f\vert f_k\rangle\right\vert^2\leq B\|f\|^2,
\end{equation}
If $ V_H^L\not= leftspan\{f_k\}_{k=0}^{\infty}$ then 
there exists $f\in V_H^L$, $f\not=0$ such that 
$$\langle f|f_k\rangle=0,\quad k=1,\cdots m.$$
Hence $\{f_k\}_{k=1}^m$ cannot be a frame.\\
Conversely suppose that $V_H^L=leftspan\{f_k\}_{k=1}^{m}$. From proposition (\ref{P2}) $\{f_k\}_{k=1}^m$ is a frame for $leftspan\{f_k\}_{k=1}^{m}$, thereby $\{f_k\}_{k=1}^{m}$ is a frame for $V_H^L$.
\end{proof}
From the above corollary it is clear that a frame is an over complete family of vectors in a finite dimensional Hilbert space.
\subsection{Frame operator in left quaternion Hilbert space}
\subsubsection{Frame operators}
Consider now a left quaternion Hilbert space, $V_H^L$ with a frame $\left\{f_{k}\right\}^{m}_{k=1}$ and define a linear mapping 
\begin{equation}\label{eq15}
	T:H^{m}\longrightarrow V^{L}_{H},~T\left\{c_{k}\right\}^{m}_{k=1}=\displaystyle\sum_{k=1}^{m} c_{k}f_{k},~c_{k}\in H.
\end{equation}
$T$ is called the  \textit{synthesis operator} or \textit{pre-frame operator}.
 The adjoint of $T$
\begin{equation}\label{eq16}
T^{*}:V^{L}_{H}\longrightarrow H^{m}, \mbox{~~by~~}T^{*}f=\left\{\left\langle f|f_{k}\right\rangle\right\}^{m}_{k=1}	
\end{equation}
is called the \textit{analysis operator.}
 By composing $T$ with its adjoint we obtain the \textit{frame operator}
\begin{equation}\label{eq17}
S:V^{L}_{H}\longrightarrow V^{L}_{H},~Sf=TT^{*}f=\displaystyle\sum_{k=1}^{m}\left\langle f|f_{k}\right\rangle f_{k}	.
\end{equation}
Note that in terms of the frame operator, for $f\in V^{L}_{H}$
\begin{eqnarray*}
\left\langle Sf|f\right\rangle &=&\left\langle \sum_{k=1}^{m}\left\langle f|f_{k}\right\rangle f_{k}|f\right\rangle
=\sum_{k=1}^{m}\left\langle f|f_{k}\right\rangle\left\langle f_{k}|f\right\rangle
=\sum_{k=1}^{m}\left|\left\langle f|f_{k}\right\rangle\right|^{2}.
\end{eqnarray*}
That is,
\begin{equation}\label{eq18}
	\left\langle Sf|f\right\rangle=\displaystyle\sum_{k=1}^{m}\left|\left\langle f|f_{k}\right\rangle\right|^{2},~f\in V^{L}_{H}.
\end{equation}
A frame $\left\{f_{k}\right\}^{m}_{k=1}$ is tight if we can choose $A=B$ in the definition (\ref{D3}), in this case (\ref{eq5})  gives
\begin{eqnarray*}
\displaystyle\sum_{k=1}^{m}\left|\left\langle f|f_{k}\right\rangle\right|^{2}=A\left\|f\right\|^{2},~\text{for all ~}f\in V^{L}_{H}. 
\end{eqnarray*}
Thereby $\left\langle Sf|f\right\rangle=A\left\|f\right\|^{2},~\text{for all ~}f\in V^{L}_{H}. $

\begin{proposition}\label{P3}
Let $\left\{f_{k}\right\}^{m}_{k=1}$ be a tight frame for $V^{L}_{H}$ with the frame bound $A.$ Then $S=AI$ (where $I$ is the identity operator on $V^{L}_{H}$),and 
\begin{eqnarray*}
f=\frac{1}{A}\sum_{k=1}^{m}\left\langle f|f_{k}\right\rangle f_{k},~\text{for all ~}f\in V^{L}_{H}. 
\end{eqnarray*}
\end{proposition}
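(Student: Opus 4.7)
The plan is to derive $S = AI$ by combining equation (\ref{eq18}) with a quaternionic polarization argument; the reconstruction formula then drops out immediately from the definition of $S$.

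First, from (\ref{eq18}) and the tight-frame equality I would note that $\langle Sf|f\rangle = A\|f\|^{2} = \langle AIf|f\rangle$ for every $f\in V_{H}^{L}$, and so, setting $R := S - AI$, one has $\langle Rf|f\rangle = 0$ on $V_{H}^{L}$. I would next record that $S = TT^{*}$ is self-adjoint and $AI$ is self-adjoint (since $A\in\R$), so $R$ is self-adjoint; combined with property~(a) of Definition~\ref{D1} this yields $\langle Rg|f\rangle = \overline{\langle Rf|g\rangle}$.

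The key step is to upgrade $\langle Rf|f\rangle = 0$ to $R = 0$. Substituting $f + \qu g$ in place of $f$ with $\qu\in H$ arbitrary, and using $R(\qu g) = \qu Rg$ (property~(i) for linear operators), the inner product rules (d), (e), and $\langle Rg|g\rangle = 0$, most terms cancel and I obtain
\[
\langle Rf|g\rangle\,\overline{\qu} \;+\; \qu\,\overline{\langle Rf|g\rangle} \;=\; 0.
\]
Writing $s := \langle Rf|g\rangle$, this reads $\operatorname{Re}(\qu\,\overline{s}) = 0$ for every $\qu\in H$. Running $\qu$ through $\{1,i,j,k\}$ in turn extracts each of the four real components of $s$ and forces $s = 0$. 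Hence $\langle Rf|g\rangle = 0$ for all $f,g\in V_{H}^{L}$; taking $g = Rf$ gives $\|Rf\|^{2} = 0$, so $Rf = 0$ and $S = AI$.

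Finally, since $Sf = \sum_{k=1}^{m}\langle f|f_{k}\rangle f_{k}$ by (\ref{eq17}) and $Sf = Af$, dividing through by the positive real scalar $A$ (which commutes with everything in sight by property~(d)) yields the reconstruction formula. The principal obstacle is the polarization step: unlike in the complex setting, a single polarization identity is not sufficient, and one must exploit the four-dimensional structure of $H$ by testing against $\qu\in\{1,i,j,k\}$ to peel off all real components of $\langle Rf|g\rangle$. Once that is done, self-adjointness of $R = S - AI$ makes the rest routine.
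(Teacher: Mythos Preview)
Your proof is correct. Both you and the paper begin from the observation $\langle Sf|f\rangle = A\|f\|^{2} = \langle AIf|f\rangle$, but the two arguments then diverge. The paper simply asserts ``Hence $S=AI$'' without further justification, whereas you supply a genuine quaternionic polarization argument: you set $R=S-AI$, note that $R$ is self-adjoint, substitute $f+\qu g$, and run $\qu$ through $\{1,i,j,k\}$ to kill all four real components of $\langle Rf|g\rangle$. This fills what is in fact a gap in the paper's proof. For the reconstruction formula, the paper takes a roundabout route through Corollary~\ref{c1} (writing $f=\sum c_{k}f_{k}$ and then ``defining'' $c_{k}=\langle f|\tfrac{1}{A}f_{k}\rangle$, which is logically awkward), while you obtain it in one line from $Sf=Af$ and (\ref{eq17}). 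Your approach is both more rigorous on the first point and more direct on the second.
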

\begin{proof}
The frame operator $S$ is given by
\begin{eqnarray*}
S: V^{L}_{H}\longrightarrow  V^{L}_{H};\quad Sf=\sum_{k=1}^{m}\left\langle f|f_{k}\right\rangle f_{k}.
\end{eqnarray*}
Let $f\in V^{L}_{H},$ then
\begin{eqnarray*}
\left\langle Sf|f\right\rangle &=&\left\langle \sum_{k=1}^{m}\left\langle f|f_{k}\right\rangle f_{k}|f\right\rangle
=\displaystyle\sum_{k=1}^{m}\left|\left\langle f|f_{k}\right\rangle\right|^{2}.
\end{eqnarray*}
Since the frame $\left\{f_{k}\right\}^{m}_{k=1}$ is tight for $V_{H}^{L},$
\begin{eqnarray*}
\left\langle Sf|f\right\rangle =A\left\|f\right\|^{2},~\text{for all ~}f\in V^{L}_{H}.
\end{eqnarray*}
Now,
\begin{eqnarray*}
\left\langle Sf|f\right\rangle &=&A\left\|f\right\|^{2}
=A\left\langle f|f\right\rangle
=A\left\langle If|f\right\rangle
=\left\langle AIf|f\right\rangle.
\end{eqnarray*}
Thereby $\left\langle Sf|f\right\rangle=\left\langle AIf|f\right\rangle$, for all $f\in V^{L}_{H}.$
Hence $S=AI.$
Since $\left\{f_{k}\right\}^{m}_{k=1}$ is a frame for $V^{L}_{H},$ from corollary (\ref{c1} ),
\begin{equation}\label{eq19}
V^{L}_{H}=leftspan\left\{f_{k}\right\}^{m}_{k=1}.
\end{equation}
Therefore for given $f\in V^{L}_{H},$ there exists $c_{k}\in H$ such that 
\begin{equation}\label{eq20}
	f=\sum_{k=1}^{m}c_{k}f_{k}.
\end{equation}
Now define $c_{k}=\left\langle f|g_{k}\right\rangle$ and $g_{k}=\displaystyle\frac{1}{A}f_{k},$ here $g_{k}\in V^{L}_{H}.$
Then (\ref{eq20} ) becomes
\begin{eqnarray*}
f &=&\sum_{k=1}^{m}\left\langle f|g_{k}\right\rangle f_{k} 
=\sum_{k=1}^{m}\left\langle f|\frac{1}{A}f_{k}\right\rangle f_{k}
=\sum_{k=1}^{m}\left\langle f|f_{k}\right\rangle \overline{\left(\frac{1}{A}\right)} f_{k}
=\frac{1}{A}\sum_{k=1}^{m}\left\langle f|f_{k}\right\rangle f_{k}\mbox{~~~as~} A \mbox{~~is real}.
\end{eqnarray*}
Hence 
$$f=\frac{1}{A}\sum_{k=1}^{m}\left\langle f|f_{k}\right\rangle f_{k},\mbox{~~\text{for all}~} f\in V^{L}_{H}. $$
\end{proof}

\begin{definition}\cite{Ad}
Let $V^L_H$ be any left quaternion Hilbert space. A mapping $S:V^L_H\longrightarrow V^L_H$ is said to be \textit{left-linear} if,
\begin{eqnarray*}
S(\alpha f+\beta g)=\alpha S(f)+\beta S(g),
\end{eqnarray*}
for all $f,g\in V^L_H$ and $\alpha,\beta\in H.$
\end{definition}
\begin{definition}\cite{Ad}
A linear operator $S:V^L_H\longrightarrow V^L_H$ is said to be \textit{bounded} if,
\begin{eqnarray*}
\left\|Sf\right\|\leq K\left\|f\right\|,
\end{eqnarray*}
for some constant $K\geq 0$ and all $f\in V^L_ H.$
\end{definition}
\begin{definition}\cite{Ad}\label{D4}(Adjoint operator )
Let $S:V_{H}^{L}\longrightarrow V_{H}^{L}$ be a bounded linear operator on a left quaternion Hilbert space $V_{H}^{L}.$ We define its adjoint to be the operator
 $S^*:V_{H}^{L}\longrightarrow V_{H}^{L}$ that has the property
\begin{equation}\label{eq21}
\left\langle f|Sg\right\rangle=\left\langle S^{*}f|g\right\rangle,~\text{for all~}f,g\in V_{H}^{L}.
\end{equation}
\end{definition}

\begin{lemma}\label{l3}
The adjoint operator $S^*$ of a bounded linear operator is linear and bounded.
\end{lemma}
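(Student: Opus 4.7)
The plan is to handle linearity and boundedness separately, each via the defining property $\langle f \vert Sg\rangle = \langle S^{*}f \vert g\rangle$ together with the inner product axioms of Definition \ref{D1}.

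For linearity, fix $f,g \in V_{H}^{L}$ and $\alpha,\beta \in H$, and for an arbitrary $h \in V_{H}^{L}$ I would evaluate
\[
\langle S^{*}(\alpha f + \beta g) \vert h\rangle = \langle \alpha f + \beta g \vert Sh\rangle.
\]
Additivity in the first slot follows from (a) combined with (c), and left-linearity is exactly (d), so the right-hand side expands as $\alpha\langle f \vert Sh\rangle + \beta\langle g \vert Sh\rangle = \alpha\langle S^{*}f \vert h\rangle + \beta\langle S^{*}g \vert h\rangle = \langle \alpha S^{*}f + \beta S^{*}g \vert h\rangle$. To conclude $S^{*}(\alpha f + \beta g) = \alpha S^{*}f + \beta S^{*}g$, I would invoke the elementary fact that if $\langle u \vert h\rangle = \langle v \vert h\rangle$ for every $h \in V_{H}^{L}$ then $u = v$; this is seen by choosing $h = u-v$ and using property (b) of the inner product.

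For boundedness, the key is the identity
\[
\|S^{*}f\|^{2} = \langle S^{*}f \vert S^{*}f\rangle = \langle f \vert S(S^{*}f)\rangle.
\]
The left side is real and non-negative, so taking moduli and applying the Schwartz inequality from Proposition \ref{P1} gives
\[
\|S^{*}f\|^{2} = \bigl|\langle f \vert S(S^{*}f)\rangle\bigr| \leq \|f\|\,\|S(S^{*}f)\|.
\]
Boundedness of $S$ with constant $K$ then yields $\|S(S^{*}f)\| \leq K\|S^{*}f\|$, so $\|S^{*}f\|^{2} \leq K\|f\|\,\|S^{*}f\|$. Dividing by $\|S^{*}f\|$ when it is positive (and noting that the inequality is trivial otherwise) produces $\|S^{*}f\| \leq K\|f\|$, showing that $S^{*}$ is bounded with the same constant as $S$.

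I do not expect any serious obstacle; the only point of care is keeping the quaternionic scalars $\alpha,\beta$ on the correct (left) side throughout the linearity computation, since $H$ is non-commutative. The additivity of the inner product in the first slot and the nondegeneracy lemma are both straightforward consequences of Definition \ref{D1} and do not require any separate development.
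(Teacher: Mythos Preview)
Your proposal is correct and follows essentially the same approach as the paper: both establish linearity by testing $\langle S^{*}(\alpha f+\beta g)\vert h\rangle$ against arbitrary $h$ and expanding via the inner-product axioms, and both obtain boundedness from $\|S^{*}f\|^{2}=\langle f\vert S(S^{*}f)\rangle$ together with Schwartz and the boundedness of $S$. The only cosmetic difference is that the paper writes out the conjugate-symmetry manipulations for first-slot additivity in full, whereas you simply cite axioms (a) and (c).
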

\begin{proof}Linearity can easily be verified and every linear operator on a finite dimensional quaternion Hilbert space is bounded.
\end{proof}

\begin{definition}\cite{Ad}\label{D5}(\textit{Self-adjoint operator})Let $V_{H}^{L}$ be a left quaternion Hilbert space. A bounded linear operator $S$ on
$V_{H}^{L}$ is called \textit{self-adjoint, }if $S=S^*.$
\end{definition}

\begin{lemma}\label{l4}
Let $S:V_{H}^{L}\longrightarrow V_{H}^{L}~$and$~T:V_{H}^{L}\longrightarrow V_{H}^{L}~$be bounded linear operators on $V_{H}^{L}.$ Then for any$~f,g\in V_{H}^{L},$  we have
\begin{enumerate}
	\item [(a)] $\left\langle Sf|g\right\rangle=\left\langle f|S^{*}g\right\rangle.$
	\item [(b)]$(S+T)^{*}=S^{*}+T^{*}.$
	\item [(c)] $(ST)^{*}=T^{*}S^{*}.$ 	
	\item [(d)] $(S^{*})^{*}=S.$
	\item [(e)] $I^{*}=I,~$ where $~I~$ is an identity operator on $~V_{H}^{L}.~$
	\item [(f)] If $~S~$ is invertible then $~(S^{-1})^*=(S^{*})^{-1}.~$
\end{enumerate}
\end{lemma}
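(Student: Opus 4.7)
The plan is to establish each of the six identities by a single unifying strategy: to show that two bounded linear operators $U$ and $V$ on $V_H^L$ coincide, it suffices to verify $\langle Uf|g\rangle = \langle Vf|g\rangle$ for all $f,g\in V_H^L$, since additivity in the first slot of the inner product (itself a consequence of (a) and (c) of Definition \ref{D1}) then gives $\langle Uf-Vf|g\rangle = 0$ for every $g$, and choosing $g=Uf-Vf$ together with property (b) of Definition \ref{D1} forces $Uf=Vf$. Once this reduction is made, each part should become a short manipulation using only the defining relation $\langle f|Sg\rangle=\langle S^*f|g\rangle$ of Definition \ref{D4} together with the axioms of the inner product.

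For (a), I would start from the defining relation and apply conjugate symmetry: $\langle Sf|g\rangle=\overline{\langle g|Sf\rangle}=\overline{\langle S^*g|f\rangle}=\langle f|S^*g\rangle$. For (b), the plan is to compute $\langle (S+T)^*f|g\rangle = \langle f|(S+T)g\rangle = \langle f|Sg\rangle + \langle f|Tg\rangle = \langle S^*f|g\rangle+\langle T^*f|g\rangle = \langle (S^*+T^*)f|g\rangle$, invoking additivity of the inner product in its second slot. For (c), the chain $\langle (ST)^*f|g\rangle=\langle f|S(Tg)\rangle=\langle S^*f|Tg\rangle=\langle T^*S^*f|g\rangle$ will deliver the result immediately. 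Part (e) reduces to $\langle I^*f|g\rangle=\langle f|Ig\rangle=\langle f|g\rangle=\langle If|g\rangle$.

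Part (d) will need the adjoint definition applied twice: by the defining relation for the adjoint of $S^*$ (which exists and is bounded by Lemma \ref{l3}), $\langle (S^*)^*f|g\rangle=\langle f|S^*g\rangle$, and by part (a), $\langle f|S^*g\rangle=\langle Sf|g\rangle$, whence $(S^*)^*=S$. Part (f) will follow from (c) and (e): if $S$ has two-sided inverse $S^{-1}$, then taking adjoints in $SS^{-1}=I=S^{-1}S$ and applying (c) together with $I^*=I$ from (e) yields $(S^{-1})^*S^*=I=S^*(S^{-1})^*$, which is precisely the statement that $S^*$ is invertible with $(S^*)^{-1}=(S^{-1})^*$.

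The only point at which I would worry is whether the non-commutativity of $H$ could obstruct any of these manipulations, since in the quaternionic setting one must be careful whenever scalars traverse an inner product slot. In the present plan, however, no quaternionic scalar ever enters or leaves an inner product; I rely solely on additivity in each slot together with the conjugate-symmetry identity (a) of Definition \ref{D1}. Consequently I expect no genuine obstacle — the classical complex-Hilbert-space argument should transcribe verbatim, with the linearity and boundedness of $S^*$ already guaranteed by Lemma \ref{l3}.
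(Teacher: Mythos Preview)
Your proposal is correct and follows essentially the same route as the paper: each part is verified by computing $\langle U f\mid g\rangle$ for the two candidate operators using only the defining relation of Definition~\ref{D4} and the conjugate-symmetry and additivity axioms of the inner product, with (d) invoking (a) and (f) obtained by applying (c) and (e) to $SS^{-1}=S^{-1}S=I$. The only cosmetic difference is that you isolate the ``$\langle Uf\mid g\rangle=\langle Vf\mid g\rangle$ for all $f,g$ implies $U=V$'' step explicitly at the outset, whereas the paper leaves it implicit.
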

\begin{proof}It is straightforward.
\end{proof}

\begin{lemma}\label{l5}
Let $U_{H}^{L}$ and $V_{H}^{L}$ be finite dimensional left quaternion Hilbert spaces and $S:U_{H}^{L}\longrightarrow V_{H}^{L}$ be a linear mapping then $\ker S$ is a subspace of $U_{H}^{L}.$
\end{lemma}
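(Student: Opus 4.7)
The plan is to verify the three standard subspace axioms for $\ker S = \{f \in U_H^L : Sf = 0\}$, using only the linearity of $S$ as recalled in the definition just before Lemma \ref{l3}. Since the ambient space $U_H^L$ is a left quaternion vector space, a subspace must be non-empty, closed under addition, and closed under left multiplication by quaternion scalars.

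First I would establish that $\ker S$ is non-empty by showing $0 \in \ker S$. Applying linearity with $\alpha = \beta = 0$ (viewed as elements of $H$) and any $f,g \in U_H^L$ gives $S(0) = 0 \cdot S(f) + 0 \cdot S(g) = 0$, so the zero vector of $V_H^L$ is the image of the zero vector of $U_H^L$.

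Next I would check closure under addition: for $f_1, f_2 \in \ker S$, linearity (taking $\alpha = \beta = 1$) yields $S(f_1 + f_2) = S(f_1) + S(f_2) = 0 + 0 = 0$, so $f_1 + f_2 \in \ker S$. Then, for closure under left scalar multiplication, pick $f \in \ker S$ and $\alpha \in H$; applying linearity with $\beta = 0$ gives $S(\alpha f) = \alpha S(f) = \alpha \cdot 0 = 0$, so $\alpha f \in \ker S$. Combining the three items, $\ker S$ is a subspace of $U_H^L$.

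There is no real obstacle here; the only point worth flagging is that scalars must be applied from the left in accordance with the left quaternion Hilbert space conventions fixed in Definition \ref{D1}, so the scalar multiplication step must be written as $\alpha f$ rather than $f\alpha$ (which need not even make sense as a left-module operation). The finite-dimensionality hypothesis is not used in the argument and could be dropped, but since the lemma statement restricts to the finite-dimensional setting we simply do not invoke it.
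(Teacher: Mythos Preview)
Your proof is correct and follows essentially the same approach as the paper: both arguments simply invoke the linearity of $S$ to show $\ker S$ is closed under quaternion-linear combinations. The paper handles this in a single step by showing $S(\alpha u_1+\beta u_2)=0$ for $u_1,u_2\in\ker S$ and $\alpha,\beta\in H$, whereas you split the verification into the three subspace axioms and add the explicit check that $0\in\ker S$, but this is only a cosmetic difference.
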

\begin{proof}
Easy to verify.
\end{proof}

\begin{lemma}\label{l6}
Let $U_{H}^{L},V_{H}^{L}$ be finite dimensional left quaternion Hilbert spaces and $S:U_{H}^{L}\longrightarrow V_{H}^{L}$ be a linear mapping then 
\begin{eqnarray*}
\text{dim}R_{S}+\text{dim}N_{S}=\text{dim}U_{H}^{L},
\end{eqnarray*}
where $R_{S}:=\text{image of}~ S,~N_{S}:=\ker S.$
\end{lemma}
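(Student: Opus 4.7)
The plan is to mimic the standard rank--nullity argument, the only subtlety being that in $V_H^L$ scalars multiply vectors on the left, so every linear combination must be written in that convention.

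First I would set $n := \dim N_S$ and $m := \dim U_H^L$, and invoke Lemma \ref{l5} so that $N_S$ is a genuine subspace of $U_H^L$ on which we may pick a basis $\{u_1,\dots,u_n\}$. By the definition of basis (Definition \ref{D2}) in a finite dimensional left quaternion Hilbert space, this linearly independent family extends to a full basis $\{u_1,\dots,u_n,v_1,\dots,v_{m-n}\}$ of $U_H^L$ (one appends, one at a time, any vector not in the current left span until the left span exhausts $U_H^L$; this terminates because $\dim U_H^L = m$).

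The heart of the argument is the claim that $\{Sv_1,\dots,Sv_{m-n}\}$ is a basis of $R_S$. For the spanning part, any $y \in R_S$ equals $Sw$ for some $w \in U_H^L$; write $w = \sum_{i=1}^{n} \alpha_i u_i + \sum_{j=1}^{m-n} \beta_j v_j$ with $\alpha_i,\beta_j \in H$, and apply $S$, using linearity together with $Su_i = 0$, to get $y = \sum_{j=1}^{m-n} \beta_j (Sv_j)$. For linear independence, suppose $\sum_{j=1}^{m-n} \beta_j (Sv_j) = 0$; by linearity this reads $S\bigl(\sum_{j=1}^{m-n} \beta_j v_j\bigr) = 0$, placing $\sum_j \beta_j v_j$ in $N_S$, so it equals $\sum_{i=1}^{n} \alpha_i u_i$ for some $\alpha_i \in H$. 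Rearranging gives a left linear relation among $\{u_1,\dots,u_n,v_1,\dots,v_{m-n}\}$, which forces every $\alpha_i$ and every $\beta_j$ to vanish by linear independence of the extended basis.

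Combining these two observations, $\dim R_S = m-n$, so $\dim R_S + \dim N_S = (m-n) + n = m = \dim U_H^L$, which is the desired identity.

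I do not anticipate a serious obstacle. The only place where non-commutativity of $H$ could in principle bite is in the extension of $\{u_1,\dots,u_n\}$ to a basis of $U_H^L$ and in reading off coefficients from a left linear combination; but since all scalars are kept on the left throughout, the standard finite-dimensional basis-extension and coefficient-matching arguments carry over verbatim, the associativity of the quaternion product (Property (a) following Definition \ref{D1}) being all that is really used.
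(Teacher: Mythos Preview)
Your proof is correct and follows essentially the same route as the paper: pick a basis of $N_S$, extend it to a basis of $U_H^L$, and verify that the images of the appended vectors form a basis of $R_S$ by the standard spanning and linear-independence checks. Apart from cosmetic differences (your $m,n$ are the paper's $n,m$, and you explicitly cite Lemma~\ref{l5} and comment on non-commutativity), the argument is the same.
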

\begin{proof}Proof from the complex theory can easily be adapted.
\end{proof}

\begin{lemma}\label{l7}
Let $S:U_{H}^{L}\longrightarrow V_{H}^{L}$ be a linear mapping. $S$ is one to one if and only if  $N_{S}=\left\{0\right\}.$
\end{lemma}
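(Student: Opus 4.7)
The plan is to prove both implications directly from the definitions, using only linearity of $S$ and the definition of the kernel $N_S=\{u\in U_H^L : Su=0\}$. This is a standard one-page argument whose quaternionic character plays no role, since linearity over $H$ is all that enters.

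For the forward implication, I would assume $S$ is one-to-one and show $N_S=\{0\}$. I would first note that $S(0)=0$ follows from linearity (apply $S(\alpha u)=\alpha Su$ with $\alpha=0\in H$). Then for any $u\in N_S$ we have $Su=0=S(0)$, and injectivity forces $u=0$. Combined with the trivial inclusion $\{0\}\subseteq N_S$, this gives $N_S=\{0\}$.

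For the converse, assume $N_S=\{0\}$ and suppose $Su_1=Su_2$ for some $u_1,u_2\in U_H^L$. By linearity (taking scalars $1,-1\in H$, which commute with everything),
\begin{equation*}
S(u_1-u_2)=Su_1-Su_2=0,
\end{equation*}
so $u_1-u_2\in N_S=\{0\}$, hence $u_1=u_2$. Thus $S$ is one-to-one.

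There is no real obstacle here: the only small thing to keep track of is that the proof uses linearity of $S$ over the quaternion scalars exactly as in the complex case, and that Lemma~\ref{l5} has already verified $N_S$ is well-defined as a subspace so that the statement $N_S=\{0\}$ makes sense. No non-commutativity issues intervene because the scalars $0,1,-1$ are central in $H$.
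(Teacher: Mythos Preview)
Your proof is correct and follows essentially the same approach as the paper: both directions are handled by the standard argument using $S(0)=0$ for the forward implication and $S(u_1-u_2)=0$ for the converse. The paper's version is slightly terser but substantively identical.
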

\begin{proof}
Easy to verify.
\end{proof}

\begin{lemma}\label{l8}
Let $U_{H}^{L},V_{H}^{L}$ are finite dimensional left quaternion Hilbert spaces with same dimension.
Let $S:U_{H}^{L}\longrightarrow V_{H}^{L}$ be a linear mapping. If $S$ is one to one then $S$ is onto.
\end{lemma}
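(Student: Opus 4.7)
The plan is to deduce the result directly from the two preceding lemmas, namely Lemma \ref{l6} (the rank-nullity identity for linear maps between finite dimensional left quaternion Hilbert spaces) and Lemma \ref{l7} (which characterizes injectivity by triviality of the kernel). The overall strategy is completely parallel to the classical complex case: an injective linear map between spaces of the same finite dimension has full rank, hence is surjective.

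First, I would invoke Lemma \ref{l7}: since $S$ is one to one, the kernel satisfies $N_S = \{0\}$, so $\dim N_S = 0$. Next, I would apply Lemma \ref{l6} to conclude
\begin{eqnarray*}
\dim R_S = \dim U_H^L - \dim N_S = \dim U_H^L.
\end{eqnarray*}
By hypothesis $\dim U_H^L = \dim V_H^L$, so $\dim R_S = \dim V_H^L$.

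It then remains to argue that a subspace $R_S \subseteq V_H^L$ of the same (finite) dimension as the ambient space must coincide with the whole space. This is the only step that is not literally quoted from an earlier lemma, and it is where one must be slightly careful about the quaternionic setting, since left scalar multiplication is the operation used to form left spans. I would take a basis $\{v_1,\dots,v_n\}$ of $R_S$, observe it is a linearly independent subset of $V_H^L$ (over left quaternionic scalars), and note that because $\dim V_H^L = n$ any linearly independent set of size $n$ in $V_H^L$ must be a basis for $V_H^L$. The proof of this fact is the same as in the complex case: if there existed $v \in V_H^L$ outside the left span, then $\{v_1,\dots,v_n,v\}$ would be linearly independent of size $n+1$, contradicting the dimension. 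This extension property was already used implicitly in the proof of Lemma \ref{l6}, so it is safe to invoke here.

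Consequently $R_S = V_H^L$, which is precisely the statement that $S$ is onto. The main obstacle is really a conceptual rather than a technical one: to make sure that the standard finite dimensional facts (extending a linearly independent set to a basis, and the consequence that $n$ linearly independent vectors span an $n$-dimensional space) do carry over to left quaternion Hilbert spaces. Once that is accepted, which the earlier lemmas already assume, the proof reduces to a two-line application of Lemmas \ref{l6} and \ref{l7}.
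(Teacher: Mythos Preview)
Your proposal is correct and follows essentially the same approach as the paper: invoke Lemma~\ref{l7} to get $N_S=\{0\}$, apply the rank--nullity Lemma~\ref{l6} to obtain $\dim R_S=\dim U_H^L=\dim V_H^L$, and conclude $R_S=V_H^L$. The only difference is that you spell out why a subspace of full dimension equals the ambient space, a step the paper passes over in a single line.
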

\begin{proof}
Easy to verify.
\end{proof}

\begin{lemma}\label{l9}(Pythagoras' law)
Suppose that $f$ and $g$ is an arbitrary pair of orthogonal vectors in the left quaternion  Hilbert space $V_{H}^{L}.$ Then we have Pythagoras' formula
\begin{equation}\label{eq25}
	\left\|f+g\right\|^{2}=\left\|f\right\|^{2}+\left\|g\right\|^{2}.
\end{equation}
\end{lemma}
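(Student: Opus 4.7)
The plan is to prove this by a direct expansion of $\|f+g\|^2 = \langle f+g \mid f+g\rangle$ using the properties of the inner product laid out in Definition \ref{D1}, together with the orthogonality hypothesis $\langle f \mid g\rangle = 0$.

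First I would unfold the squared norm via property (b): $\|f+g\|^2 = \langle f+g \mid f+g\rangle$. To break the first slot apart I cannot apply additivity directly, since only property (c) (linearity in the second slot) is given. So the natural move is to invoke conjugate symmetry (property (a)): $\langle f+g \mid f+g\rangle = \overline{\langle f+g \mid f+g\rangle}$, but more usefully I would expand the second slot first, obtaining $\langle f+g \mid f\rangle + \langle f+g \mid g\rangle$ by (c). Then to break the first slot in each piece, I pass to the conjugate via (a), use (c) on the resulting second slot, and conjugate back. This yields the four expected terms $\langle f\mid f\rangle + \langle g\mid f\rangle + \langle f\mid g\rangle + \langle g\mid g\rangle$.

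Next, by orthogonality $\langle f\mid g\rangle = 0$, and from property (a) also $\langle g\mid f\rangle = \overline{\langle f\mid g\rangle} = 0$. Substituting these into the expansion and recognizing $\langle f\mid f\rangle = \|f\|^2$ and $\langle g\mid g\rangle = \|g\|^2$ by property (b), the identity $\|f+g\|^2 = \|f\|^2 + \|g\|^2$ follows immediately.

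There is no real obstacle here; the only subtlety is remembering that the inner product is given to be additive in the second slot only, so additivity in the first slot must be derived by passing through quaternionic conjugation via property (a). Once this routine step is handled, the rest is a one-line substitution using $\langle f\mid g\rangle=\langle g\mid f\rangle=0$.
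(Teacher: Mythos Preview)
Your proposal is correct and follows essentially the same approach as the paper's proof: expand $\langle f+g\mid f+g\rangle$ in the second slot via property (c), pass through conjugate symmetry (a) to handle additivity in the first slot, and then kill the cross terms using orthogonality. The paper carries out exactly this computation line by line, so your write-up matches both in method and in the one subtlety you flagged.
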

\begin{proof}
Straightforward.  
\end{proof}

\begin{lemma}\label{l10}
Let $T:H^{m}\longrightarrow V^{L}_{H}$ be a linear mapping and $T^{*}:V^{L}_{H}\longrightarrow H^{m}$ be its adjoint operator. Then $N_{T}=R^{\bot}_{T^{*}},$ where $N_{T}:=\ker T$ and $R_{T^{*}}:=\mbox{~range of~}T^{*}.$ 
\end{lemma}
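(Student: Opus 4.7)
The plan is to establish the two inclusions $N_T \subseteq R_{T^*}^{\perp}$ and $R_{T^*}^{\perp} \subseteq N_T$ separately, using only the defining property of the adjoint (Lemma \ref{l4}(a), namely $\langle Tc \mid f \rangle = \langle c \mid T^{*}f \rangle$, together with its conjugated form $\langle T^{*}f \mid c \rangle = \langle f \mid Tc \rangle$) and positive-definiteness of the inner product.

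For the forward inclusion, I would take $c \in N_T \subseteq H^{m}$, so by definition $Tc = 0$. For an arbitrary $f \in V_H^{L}$, I would compute
\begin{eqnarray*}
\langle T^{*}f \mid c \rangle_{H^{m}} &=& \langle f \mid Tc \rangle_{V_H^L} \;=\; \langle f \mid 0 \rangle \;=\; 0,
\end{eqnarray*}
invoking Lemma \ref{l4}(a) at the first step. Since $f$ is arbitrary, $c$ is orthogonal to every element of $R_{T^{*}}$, hence $c \in R_{T^{*}}^{\perp}$.

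For the reverse inclusion, suppose $c \in R_{T^{*}}^{\perp}$. Then $\langle T^{*}f \mid c \rangle = 0$ for every $f \in V_H^{L}$, and the adjoint identity rewrites this as $\langle f \mid Tc \rangle = 0$ for all $f \in V_H^{L}$. Specializing $f = Tc \in V_H^{L}$ yields $\langle Tc \mid Tc \rangle = \|Tc\|^{2} = 0$, and by property (b) of Definition \ref{D1} this forces $Tc = 0$, i.e. $c \in N_T$.

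The main thing to watch for, rather than any deep obstacle, is keeping the slots of the quaternionic inner product straight under the noncommutative scalar multiplication: the adjoint relation is already packaged as a statement about $\langle \cdot \mid \cdot \rangle$ with a fixed convention (linear in the second entry and conjugate-linear in the first), so as long as one applies Lemma \ref{l4}(a) in the correct slot, no extra conjugation of quaternion scalars appears and both inclusions fall out immediately.
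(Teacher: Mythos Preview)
Your proof is correct and follows essentially the same route as the paper's: both establish the two inclusions directly from the adjoint identity, with the reverse inclusion concluding that $\langle f\mid Tc\rangle=0$ for all $f$ forces $Tc=0$. The only cosmetic difference is that you specialize $f=Tc$ explicitly, whereas the paper simply asserts $Tx=0$ from the same universal orthogonality.
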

\begin{proof}
Proof from the complex theory can easily be adapted.
\end{proof}

\begin{theorem}\label{T1}
Let $\left\{f_{k}\right\}^{m}_{k=1}$ be a frame for $V^{L}_{H}$ with frame operator $S.$ Then
\begin{enumerate}
\item $S$ is invertible and self-adjoint.
\item Every $f\in V_{H}^{L},$ can be represented as
$$f=\sum_{k=1}^{m}\left\langle f|S^{-1}f_{k}\right\rangle f_{k}=\sum_{k=1}^{m}\left\langle f|f_{k}\right\rangle S^{-1} f_{k}.$$
\item If $f\in V_{H}^{L},$ and has the representation $f=\displaystyle\sum_{k=1}^{m}c_{k}f_{k}$ for some scalar coefficients $\left\{c_{k}\right\}^{m}_{k=1},$ then $$\sum_{k=1}^{m}\left|c_{k}\right|^{2}=\sum_{k=1}^{m}\left|\left\langle f|S^{-1}f_{k}\right\rangle\right|^{2}+\sum_{k=1}^{m}\left|c_{k}-\left\langle f|S^{-1}f_{k}\right\rangle\right|^{2}.$$
\end{enumerate}
\end{theorem}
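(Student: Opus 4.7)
The plan is to dispatch the three claims in order. For \textbf{(1)}, self-adjointness of $S=TT^{*}$ is immediate from Lemma \ref{l4}(c) and (d): $S^{*}=(TT^{*})^{*}=(T^{*})^{*}T^{*}=TT^{*}=S$. For invertibility, I will use the lower frame bound together with identity (\ref{eq18}): $\langle Sf|f\rangle=\sum_{k}|\langle f|f_{k}\rangle|^{2}\geq A\|f\|^{2}$, so $Sf=0$ forces $f=0$. Thus $N_{S}=\{0\}$, and since $V_{H}^{L}$ is finite dimensional, Lemmas \ref{l7} and \ref{l8} promote injectivity to bijectivity.

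For \textbf{(2)}, I will read off the two decompositions directly from $SS^{-1}=S^{-1}S=I$. From the left, using left-linearity of $S^{-1}$, $f=S^{-1}(Sf)=S^{-1}\sum_{k}\langle f|f_{k}\rangle f_{k}=\sum_{k}\langle f|f_{k}\rangle S^{-1}f_{k}$. From the right, $f=S(S^{-1}f)=\sum_{k}\langle S^{-1}f|f_{k}\rangle f_{k}$, and then I will convert this to the target form via the identity $\langle S^{-1}f|f_{k}\rangle=\langle f|S^{-1}f_{k}\rangle$. The last step rests on $S^{-1}$ being self-adjoint, which in turn follows from (1) and Lemma \ref{l4}(f): $(S^{-1})^{*}=(S^{*})^{-1}=S^{-1}$.

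For \textbf{(3)}, I will interpret the identity as a Pythagorean decomposition in $H^{m}$. Set $d_{k}=\langle f|S^{-1}f_{k}\rangle$, and view $c=(c_{k})$, $d=(d_{k})$ as vectors in the left quaternion Hilbert space $H^{m}$. Both $Tc$ and $Td$ equal $f$ — the second equality is precisely part (2) — so $c-d\in N_{T}$, which equals $R_{T^{*}}^{\perp}$ by Lemma \ref{l10}. On the other hand, using self-adjointness of $S^{-1}$ once more, $T^{*}(S^{-1}f)=\{\langle S^{-1}f|f_{k}\rangle\}_{k}=\{\langle f|S^{-1}f_{k}\rangle\}_{k}=d$, so $d\in R_{T^{*}}$. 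Therefore $d\perp c-d$ in $H^{m}$, and Pythagoras' law (Lemma \ref{l9}) yields $\|c\|^{2}=\|d\|^{2}+\|c-d\|^{2}$, which is the stated formula once the $H^{m}$-norms are written coordinate-wise.

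The only place where non-commutativity of the scalars has any bite is the conversion $\langle S^{-1}f|f_{k}\rangle=\langle f|S^{-1}f_{k}\rangle$ in part (2); it is resolved cleanly by the self-adjointness of $S^{-1}$ rather than any ad hoc conjugation. Everything else is formal manipulation with the pre-frame operator $T$ and its adjoint, so the main obstacle is essentially a bookkeeping one: making sure that scalars remain on the correct side of vectors and that the adjoint rules from Lemma \ref{l4} are applied in the prescribed left quaternionic convention.
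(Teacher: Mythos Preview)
Your proposal is correct and follows essentially the same route as the paper: self-adjointness from $(TT^{*})^{*}=TT^{*}$, injectivity from the lower frame bound via (\ref{eq18}) and then bijectivity by Lemmas \ref{l7}--\ref{l8}; the two decompositions from $SS^{-1}=S^{-1}S=I$ together with self-adjointness of $S^{-1}$ (Lemma \ref{l4}(f)); and the $\ell^{2}$ identity by writing $c=(c-d)+d$ with $c-d\in N_{T}=R_{T^{*}}^{\perp}$ (Lemma \ref{l10}) and $d=T^{*}(S^{-1}f)\in R_{T^{*}}$, then invoking Pythagoras (Lemma \ref{l9}). The only point the paper makes more explicit is the verification that $S^{-1}$ is left-linear before pulling it inside the sum, which you assert but do not spell out.
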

\begin{proof}

(1)~ $S:V_{H}^{L}\longrightarrow V_{H}^{L},$ by $Sf=TT^{*}f=\displaystyle\sum_{k=1}^{m}\left\langle f|f_{k}\right\rangle f_{k},$ $\text{for all ~}f\in V_{H}^{L}.$
	Now	
	\begin{eqnarray*}
S^*=(TT^*)^*
=(T^*)^*T^*
=TT^*
=S.
	\end{eqnarray*}
It follows that $S$ is self-adjoint.
We have $\ker S=\left\{f~:~Sf=0\right\}.$
Let $f\in \ker S,$ then $Sf=0.$
Therefore 
	\begin{eqnarray*}
	0&=&\left\langle Sf|f\right\rangle
	=\left\langle\displaystyle\sum_{k=1}^{m}\left\langle f|f_{k}\right\rangle f_{k}|f \right\rangle
	=\sum_{k=1}^{m}\left|\left\langle f|f_{k}\right\rangle \right|^{2}.
	\end{eqnarray*}	
	Thereby $\sum_{k=1}^{m}\left|\left\langle f|f_{k}\right\rangle \right|^{2}=0.$
	Since $\left\{f_{k}\right\}^{m}_{k=1}$ be a frame for $V^{L}_{H},$ by definition-\ref{D3},
	$$A\left\|f\right\|^{2}\leq\sum_{k=1}^{m}\left|\left\langle f|f_{k}\right\rangle \right|^{2}\leq B\left\|f\right\|^{2},~\text{for all ~}f\in V^{L}_{H}.$$
Hence $A\left\|f\right\|^{2}\leq 0\leq B\left\|f\right\|^{2},$ $\text{for all ~}f\in V^{L}_{H}$ and $A,B>0.$
So $\left\|f\right\|^{2}=0.$ 
Therefore $f=0,$ which implies  $N_{S}={0}.$
Hence $S$ is one to one.
Since $V^L_H$ is of  finite dimension, from the lemma  (\ref{l8} ) $S$ is onto .
Therefore  $S$ is invertible.\\
(2)~ If $S$ is self-adjoint then $S^{-1}$ is self-adjoint. For,
Consider 
\begin{eqnarray*}
(S^{-1})^{*}=(S^*)^{-1}
=(S)^{-1}~~\text{as $~S~$ is self adjoint}.
\end{eqnarray*}
Thereby $S^{-1}$ is self-adjoint.
If $S:V_{H}^{L}\longrightarrow V_{H}^{L},$ is linear and bijection then $S^{-1}$ is linear. For,
Since $S$ is onto, $S^{-1}:V_{H}^{L}\longrightarrow V_{H}^{L}$.
Let $f,g\in V^{L}_{H}$ then there exists $k,h\in V^{L}_{H}$ such that 
$S^{-1}(f)=k~\text{and }~S^{-1}(g)=h.$
Thereby $f=S(k)~\text{and}~g=S(h).$
Let $\alpha,\beta\in H,$then
\begin{eqnarray*}
S^{-1}(\alpha f+\beta g)&=&S^{-1}(\alpha S(k)+\beta S(h))\\
&=&S^{-1}( S(\alpha k+\beta h))~~\text {as $~S~$ is linear.}\\
&=&\alpha k+\beta h\\
&=&\alpha S^{-1}(f)+\beta S^{-1}(g)
\end{eqnarray*}
Thereby for all $f,g\in V^{L}_{H}$ and $\alpha,\beta\in H,$
\begin{eqnarray*}
S^{-1}(\alpha f+\beta g)=\alpha S^{-1}(f)+\beta S^{-1}(g)
\end{eqnarray*}
Hence $S^{-1}$ is linear.
 Let $f\in V^{L}_{H},$
then
\begin{eqnarray*}
f&=&SS^{-1}f
=TT^*S^{-1}f
=\sum_{k=1}^m\left\langle S^{-1}f|f_k\right\rangle f_k
=\sum_{k=1}^m\left\langle f|(S^{-1})^*f_k\right\rangle f_k,
=\sum_{k=1}^m\left\langle f|S^{-1}f_k\right\rangle f_k. 
\end{eqnarray*}
Thereby for every $f\in V^{L}_{H},$
\begin{equation}\label{eq33}
 ~	f=\sum_{k=1}^m \left\langle f|S^{-1}f_k\right\rangle  f_k.
\end{equation}
Similarly we have
\begin{eqnarray*}
f&=&S^{-1}Sf
=S^{-1}TT^*f
=S^{-1}\left(\sum_{k=1}^m\left\langle f|f_k\right\rangle f_k\right)
=\sum_{k=1}^m S^{-1}\left(\left\langle f|f_k\right\rangle f_k\right)
=\sum_{k=1}^m \left\langle f|f_k\right\rangle S^{-1} f_k.
\end{eqnarray*}
Thereby for every $ f\in V^{L}_{H},$
\begin{equation}\label{eq34}
	f=\sum_{k=1}^m \left\langle f|f_k\right\rangle S^{-1} f_k.
\end{equation}
From  (\ref{eq33} ) and  (\ref{eq34} ),$\mbox{~for every~} f\in V^{L}_{H},$
\begin{eqnarray*}
	f=\sum_{k=1}^m \left\langle f|S^{-1}f_k\right\rangle  f_k=\sum_{k=1}^m \left\langle f|f_k\right\rangle S^{-1} f_k.
\end{eqnarray*}
(3)~ Let $f\in V^{L}_{H},$ from corollary  (\ref{c1} )  
$$f=\sum_{k=1}^{m}c_{k}f_{k},~ \mbox{~~for some}~c_{k}\in H.$$\\
From the part(1), 
\begin{equation}\label{eq35}
	f=\sum_{k=1}^{m}c_{k}f_{k}=\sum_{k=1}^{m}\left\langle f|S^{-1}f_{k}\right\rangle f_{k}.
\end{equation}
Hence
\begin{equation}\label{eq36}
\sum_{k=1}^{m}\left(c_{k}-\left\langle f|S^{-1}f_{k}\right\rangle \right)f_{k}=0.	
\end{equation}
Thereby $\displaystyle\sum_{k=1}^{m}d_{k}f_{k}=0,$ for some $d_{k}=\left(c_{k}-\left\langle f|S^{-1}f_{k}\right\rangle \right)\in H.$
From  (\ref{eq15} ), $ T:H^{m}\longrightarrow V^{L}_{H},$ is defined by $T\left\{d_{k}\right\}^{m}_{k=1}=\displaystyle\sum_{k=1}^{m} d_{k}f_{k},~d_{k}\in H.$ 
We have $N_{T}=\left\{\left\{d_{k}\right\}_{k=1}^m~|~T\left\{d_{k}\right\}^{m}_{k=1}=0\right\},$ therefore
\begin{equation}\label{eq37}
	\left\{d_{k}\right\}_{k=1}^{m}=\left\{c_{k}\right\}^{m}_{k=1}-\left\{\left\langle f|S^{-1}f_{k}\right\rangle\right\}^{m}_{k=1} \in N_{T}.
\end{equation}
From lemma  (\ref{l10} ),  $N_{T}=R^{\bot}_{T^{*}},$ then 
\begin{equation}\label{eq38}
	\left\{c_{k}\right\}^{m}_{k=1}-\left\{\left\langle f|S^{-1}f_{k}\right\rangle\right\}^{m}_{k=1} \in R^{\bot}_{T^{*}}.
\end{equation}
From  (\ref{eq15} ) and  (\ref{eq16} ) we have $T^{*}:V^{L}_{H}\longrightarrow H^{m},~T^{*}f=\left\{\left\langle f|f_{k}\right\rangle\right\}^{m}_{k=1}, $
and $S:V^{L}_{H}\longrightarrow V^{L}_{H},~Sf=TT^{*}f=\displaystyle\sum_{k=1}^{m}\left\langle f|f_{k}\right\rangle f_{k}.$
Hence $T^{*}(S^{-1}f)=\left\{\left\langle S^{-1}f|f_{k}\right\rangle\right\}^{m}_{k=1}.$
Therefore
$$\left\{\left\langle S^{-1}f|f_{k}\right\rangle\right\}^{m}_{k=1}\in R_{T^{*}}.$$
Since $S^{-1}$ is self adjoint,
$$\left\{\left\langle S^{-1}f|f_{k}\right\rangle\right\}^{m}_{k=1}=\left\{\left\langle f|S^{-1}f_{k}\right\rangle\right\}^{m}_{k=1}.$$
Hence
\begin{equation}\label{eq39}
\left\{\left\langle f|S^{-1}f_{k}\right\rangle\right\}^{m}_{k=1}=\left\{\left\langle S^{-1}f|f_{k}\right\rangle\right\}^{m}_{k=1}\in R_{T^{*}}.	
\end{equation}
Now we can write, 
\begin{equation}\label{eq40}
	\left\{c_{k}\right\}^{m}_{k=1}=\left\{c_{k}\right\}^{m}_{k=1}-\left\{\left\langle f|S^{-1}f_{k}\right\rangle\right\}^{m}_{k=1}+\left\{\left\langle f|S^{-1}f_{k}\right\rangle\right\}^{m}_{k=1}.
\end{equation}
From  (\ref{eq38} ),   (\ref{eq39} ),  (\ref{eq40} ) and lemma  (\ref{l9} ),
\begin{equation}\label{eq41}
\sum_{k=1}^{m}\left|c_{k}\right|^{2}=\sum_{k=1}^{m}\left|\left\langle f|S^{-1}f_{k}\right\rangle\right|^{2}+\sum_{k=1}^{m}\left|c_{k}-\left\langle f|S^{-1}f_{k}\right\rangle\right|^{2}.
\end{equation}
\end{proof}

Theorem (\ref{T1} ) is one of the most important results about frames, and  
\begin{eqnarray*}
	f=\sum_{k=1}^m \left\langle f|S^{-1}f_k\right\rangle  f_k=\sum_{k=1}^m \left\langle f|f_k\right\rangle S^{-1} f_k.
\end{eqnarray*}
is called the \textit{frame decomposition.} If $\left\{f_{k}\right\}^{m}_{k=1}$ is a frame but not a basis, there exists non-zero sequences $\left\{h_{k}\right\}^{m}_{k=1}$ such that $\displaystyle\sum_{k=1}^{m}h_{k}f_{k}=0.$ Thereby $f\in~V^{L}_{H}$ can be written as
\begin{eqnarray*}
	f&=&\sum_{k=1}^m \left\langle f|S^{-1}f_k\right\rangle  f_k+\sum_{k=1}^m h_k f_k
	=\sum_{k=1}^m \left(\left\langle f|S^{-1}f_k\right\rangle +h_k \right)f_k
\end{eqnarray*}
showing that $f$ has many representations as superpositions of the frame elements.

\begin{corollary}\label{c2}
Assume that $\left\{f_{k}\right\}^{m}_{k=1}$ is a basis for $V^{L}_{H}.$ Then there exists a unique family $\left\{g_{k}\right\}^{m}_{k=1}$ in $V^{L}_{H}$ such that 
\begin{equation}\label{eq42}
f=\sum_{k=1}^{m}\left\langle f|g_{k}\right\rangle f_{k},~\text{for all ~}f\in V^{L}_{H}.	
\end{equation}
In terms of the frame operator, $\left\{g_{k}\right\}^{m}_{k=1}=\left\{S^{-1}f_{k}\right\}^{m}_{k=1}.$ Furthermore $\left\langle f_{j}|g_{k}\right\rangle=\delta_{j,k}.$
\end{corollary}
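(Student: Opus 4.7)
The plan is to reduce the statement to Theorem \ref{T1}(2) once we know that a basis is a frame, and then to extract both uniqueness and biorthogonality from the linear independence of $\{f_k\}_{k=1}^m$. First, since $\{f_k\}_{k=1}^m$ is a basis we have $\mathrm{leftspan}\{f_k\}_{k=1}^m=V_H^L$, so Corollary \ref{c1} tells us that $\{f_k\}_{k=1}^m$ is a frame for $V_H^L$. The frame operator $S$ is then invertible by Theorem \ref{T1}(1), and the representation formula of Theorem \ref{T1}(2) gives
\[
f=\sum_{k=1}^m\langle f|S^{-1}f_k\rangle f_k,\qquad f\in V_H^L.
\]
Setting $g_k:=S^{-1}f_k$ yields existence of a family with the required property, and also identifies it in terms of the frame operator.

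For uniqueness, I would suppose that $\{g_k\}_{k=1}^m$ and $\{g_k'\}_{k=1}^m$ both satisfy (\ref{eq42}). Subtracting the two expansions gives
\[
\sum_{k=1}^m\langle f|g_k-g_k'\rangle f_k=0\quad\text{for every }f\in V_H^L.
\]
Because $\{f_k\}_{k=1}^m$ is linearly independent, every left coefficient must vanish, i.e.\ $\langle f|g_k-g_k'\rangle=0$ for all $f\in V_H^L$ and all $k$. Choosing $f=g_k-g_k'$ and invoking property (b) of Definition \ref{D1} forces $g_k=g_k'$, which proves uniqueness.

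Finally, for the biorthogonality relation $\langle f_j|g_k\rangle=\delta_{j,k}$, I would apply the just-established representation to $f=f_j$, obtaining
\[
f_j=\sum_{k=1}^m\langle f_j|g_k\rangle f_k.
\]
The basis expansion of $f_j$ in $\{f_k\}_{k=1}^m$ is of course $f_j=\sum_{k=1}^m\delta_{j,k}f_k$, and uniqueness of the left scalar coefficients in a basis expansion yields $\langle f_j|g_k\rangle=\delta_{j,k}$.

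The only point that requires any care is the non-commutativity of $H$: in a left quaternion Hilbert space the scalars of a basis expansion sit on the left, and linear independence means $\sum_k c_kf_k=0$ implies $c_k=0$ for each $k$ with the $c_k$ on the left. Both of the preceding steps (uniqueness and biorthogonality) use exactly this left-linear independence, so the quaternionic setting creates no additional obstacle. In effect, the corollary is essentially a packaging of Theorem \ref{T1}(2) together with two applications of the definition of a basis, and there is no genuinely hard step.
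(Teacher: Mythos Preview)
Your proof is correct and follows essentially the same route as the paper: invoke Theorem \ref{T1}(2) for existence with $g_k=S^{-1}f_k$, subtract two representations and use linear independence for uniqueness, then apply the representation to $f=f_j$ for biorthogonality. The only (minor) difference is that you explicitly justify via Corollary \ref{c1} that a basis is a frame before invoking Theorem \ref{T1}, whereas the paper's proof takes this for granted.
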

\begin{proof}
Let $f\in V^{L}_{H},$ from the Theorem  (\ref{T1} ),
\begin{equation}\label{eq43}
f=\sum_{k=1}^{m}\left\langle f|S^{-1}f_{k}\right\rangle f_{k}.	
\end{equation}
Now take $\left\{g_{k}\right\}^{m}_{k=1}=\left\{S^{-1}f_{k}\right\}^{m}_{k=1},$ in  (\ref{eq43} ) then,
\begin{equation}\label{eq44}
	f=\sum_{k=1}^{m}\left\langle f|g_{k}\right\rangle f_{k}.
\end{equation}
Hence there exists a family $\left\{g_{k}\right\}^{m}_{k=1}$ in $V^{L}_{H}$ such that 
\begin{equation}\label{eq45}
f=\sum_{k=1}^{m}\left\langle f|g_{k}\right\rangle f_{k},~\text{for all ~}f\in V^{L}_{H}.
\end{equation}
\textit{Uniqueness:} Assume that there is another family $\left\{h_{k}\right\}^{m}_{k=1} ~$ in $~V^{L}_{H}$ such that $$f=\displaystyle\sum_{k=1}^{m}\left\langle f|h_{k}\right\rangle f_{k},~\text{for all ~}f\in V^{L}_{H}.$$
Then $\displaystyle\sum_{k=1}^{m}\left\langle f|g_{k}\right\rangle f_{k}=\displaystyle\sum_{k=1}^{m}\left\langle f|h_{k}\right\rangle f_{k}.~$\\
$\Longrightarrow \displaystyle\sum_{k=1}^{m}\left(\left\langle f|g_{k}\right\rangle- \left\langle f|h_{k}\right\rangle\right)f_{k}=0.~\\
\Longrightarrow\displaystyle\sum_{k=1}^{m}\left\langle f|g_{k}-h_{k}\right\rangle f_{k}=0.~\\
\Longrightarrow \left\langle f|g_{k}-h_{k}\right\rangle =0,\mbox{~~ for all ~} k=1,2,\cdots,m,\mbox{~~ as ~}\left\{f_{k}\right\}^{m}_{k=1}\mbox{~~ is a basis for ~}V^{L}_{H}.\\
\Longrightarrow \left\langle f|g_{k}-h_{k}\right\rangle =0,\mbox{~~ for all ~} k=1,2,\cdots,m,\mbox{~~ for all ~}f\in V^{L}_{H}.\\
\Longrightarrow  g_{k}-h_{k}=0,\mbox{~~ for all ~} k=1,2,\cdots,m.\\
\Longrightarrow  g_{k}=h_{k},\mbox{~~ for all ~} k=1,2,\cdots,m.$\\
Hence there exists a unique family $\left\{g_{k}\right\}^{m}_{k=1}$ in $V^{L}_{H}$ such that
\begin{equation}\label{eq46}
f=\displaystyle\sum_{k=1}^{m}\left\langle f|g_{k}\right\rangle f_{k},~\text{for all ~}f\in V^{L}_{H}.	
\end{equation}
Since $f=\displaystyle\sum_{k=1}^{m}\left\langle f|g_{k}\right\rangle f_{k},~\text{for all ~}f\in V^{L}_{H},$
for fixed $f_{j}\in V^{L}_{H},$
\begin{equation}\label{eq47}
	f_{j}=\displaystyle\sum_{k=1}^{m}\left\langle f_{j}|g_{k}\right\rangle f_{k}.
\end{equation}
Since $\left\{f_{k}\right\}^{m}_{k=1}$ is a basis for $V^{L}_{H},~\left\{f_{k}\right\}^{m}_{k=1}$ is linearly independent.
Therefore in  (\ref{eq47} ), $\left\langle f_{j}|g_{k}\right\rangle=\delta_{j,k}.$
Otherwise ($\left\langle f_{j}|g_{k}\right\rangle\neq\delta_{j,k}.~$),$~\left\{f_{k}\right\}^{m}_{k=1}$ becomes linearly dependent.
Hence, 
\begin{equation}\label{eq48}
\left\langle f_{j}|g_{k}\right\rangle=\delta_{j,k}.	
\end{equation}
\end{proof}

We can give a perceptive clarification of why frames are important in signal transmission. Let us say we want to transmit a signal $f$ that belonging to a left quaternion Hilbert space from a transmitter $T$ to a receiver $R$. Suppose that both $T $ and $R$ have the knowledge of
frame $\left\{f_{k}\right\}^{m}_{k=1}$ for $V^{L}_{H}$. Let
 $T $ transmits the frame coefficients $\left\{\left\langle f|S^{-1}f_{k}\right\rangle\right\}^{m}_{k=1}.$ Using the received numbers, the receiver $R$ can reconstruct the signal $f$ using the frame decomposition.
If $R$ receives a perturbed (noisy) signal,  $\left\{\left\langle f|S^{-1}f_{k}\right\rangle+c_{k}\right\}^{m}_{k=1}$ of
the correct frame coefficients, using the received coefficients, $R$ will reconstruct the transmitted signal as
\begin{eqnarray*}
	\sum^{m}_{k=1}\left(\left\langle f|S^{-1}f_{k}\right\rangle+c_{k}\right)f_{k}&=&\sum^{m}_{k=1}\left\langle f|S^{-1}f_{k}\right\rangle f_{k}+\sum^{m}_{k=1}c_{k}f_{k}
	=f+\sum^{m}_{k=1}c_{k}f_{k}
\end{eqnarray*}
this differs from the correct signal $f$ by the noise $\sum^{m}_{k=1}c_{k}f_{k}.$  Minimizing this noise for various signals with different types of noises has been a hot topic in signal processing. Since the frame $\{f_k\}_{k=1}^m$ is an over complete set,  it is possible that some part the noise contribution may sum to zero. At the same time, if $\{f_k\}_{k=1}^m$ is an orthonormal basis this scenario is never a possibility. In that case
$$\|\sum_{k=1}^mc_kf_k\|^2=\sum_{k=1}^m|c_k|^2,$$
so each noise contribution will make the reconstruction worse.

\begin{definition}\label{DD1}
For $0<p<\infty$,
$$\ell^p=\left\{x=\{x_n\}\subset H~\vert~\sum_{n}|x_n|^p<\infty\right\}.$$
If $p\geq 1$, $\displaystyle \|x\|_p=\left(\sum_{n}|x_n|^p\right)^{\frac{1}{p}}$
defines a norm in $\ell^p$. In fact $\ell^p$ is a complete metric space with respect to this norm.
\end{definition}
We have already seen that, for $f\in V_H^L$, the frame coefficients $\{\langle f|S^{-1}f_k\rangle\}_{k=1}^m$ have minimal $\ell^2$ norm among all sequences $\{c_k\}_{k=1}^m$ for which $f=\sum_{k=1}^mc_kf_k$. In the next theorem, let us see that the existence of coefficients minimizing the $\ell^1$ norm.
\begin{theorem}\label{T2}
Let $\left\{f_{k}\right\}^{m}_{k=1}$ be a frame for a finite-dimensional left quaternion Hilbert space  $V^{L}_{H}.$ Given $f\in V^{L}_{H},$ there exist coefficients $\left\{d_{k}\right\}^{m}_{k=1}\in H^{m}$ such that
$f=\displaystyle\sum_{k=1}^{m}d_{k}f_{k},$ and 
\begin{equation}\label{eq49}
\sum_{k=1}^{m}\left|d_{k}\right|=\inf\left\{\sum_{k=1}^{m}\left|c_{k}\right|:~f=\sum_{k=1}^{m}c_{k}f_{k}\right\}.
\end{equation}
\end{theorem}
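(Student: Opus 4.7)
The plan is to prove that the infimum in (\ref{eq49}) is attained by a standard compactness argument in the finite-dimensional ambient space $H^m$, essentially reducing the optimization to a compact subset and then invoking Lemma (\ref{l2}).

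First I would set $C_f := \{(c_1,\ldots,c_m)\in H^m : f = \sum_{k=1}^{m} c_k f_k\}$, the feasible set of representing coefficients. By Corollary (\ref{c1}), since $\{f_k\}_{k=1}^m$ is a frame for $V^{L}_{H}$, we have $\text{leftspan}\{f_k\}_{k=1}^m = V^{L}_{H}$, so $C_f$ is nonempty. Moreover, $C_f = T^{-1}(\{f\})$, where $T:H^m\longrightarrow V^{L}_{H}$ is the pre-frame operator of (\ref{eq15}). Since $T$ is linear (and $H^m$ is finite-dimensional) it is continuous, so $C_f$ is closed in $H^m$ as the preimage of the closed singleton $\{f\}$.

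Next I would introduce the objective $\varphi:H^m\longrightarrow \mathbb{R}$, $\varphi(c) = \sum_{k=1}^{m}|c_k|$. The elementary estimate
$$\left|\varphi(c)-\varphi(c')\right|\leq \sum_{k=1}^{m}\bigl||c_k|-|c_k'|\bigr| \leq \sum_{k=1}^{m}|c_k-c_k'|$$
shows $\varphi$ is continuous on $H^m$. Set $\alpha := \inf\{\varphi(c):c\in C_f\}\geq 0$. Choose any $c^{(0)}\in C_f$ and let $\beta := \varphi(c^{(0)}) + 1$. Define
$$K := \{c\in C_f : \varphi(c)\leq \beta\}.$$
Then $K$ is the intersection of the closed set $C_f$ with the closed sublevel set $\varphi^{-1}([0,\beta])$, hence closed. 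It is bounded because every $c\in K$ satisfies $|c_k|\leq \beta$ for each $k$, so $c$ lies in a fixed ball of $H^m \cong \mathbb{R}^{4m}$. By the Heine–Borel property of the finite-dimensional real Euclidean space $H^m$, the set $K$ is compact.

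Finally, applying Lemma (\ref{l2}) to the continuous function $\varphi$ and the compact set $K$, there exists $d = (d_1,\ldots,d_m)\in K\subseteq C_f$ with $\varphi(d) = \min_{c\in K}\varphi(c)$. Since $\varphi(c^{(0)}) < \beta$ lies in $K$, every $c\in C_f$ with $\varphi(c)< \beta$ lies in $K$, so this minimum equals $\alpha$, i.e.\ $\sum_{k=1}^{m}|d_k| = \alpha$, which is exactly (\ref{eq49}), and of course $f = \sum_{k=1}^m d_k f_k$ by $d\in C_f$. The main (though mild) subtlety is to be careful that $H^m$ is being treated as a real $4m$-dimensional normed space so that Heine–Borel and the compactness argument apply; no deeper obstacle arises because non-commutativity of $H$ plays no role in this topological/compactness reasoning.
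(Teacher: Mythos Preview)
Your proof is correct and follows essentially the same compactness strategy as the paper: pick one representation to bound the search, restrict to a compact subset of $H^{m}$, show the $\ell^{1}$-objective $\varphi$ is continuous, and invoke Lemma~(\ref{l2}). The only cosmetic differences are that the paper bounds coordinates by the box $\{|d_k|\leq r\}$ rather than your sublevel set $\varphi^{-1}([0,\beta])$, and defers the continuity of $\varphi$ to the argument in Proposition~(\ref{P2}) instead of your direct triangle-inequality estimate; your write-up is in fact more careful about why the restricted set is compact (closedness of $C_f$ via $T^{-1}(\{f\})$ and Heine--Borel in $H^m\cong\mathbb{R}^{4m}$) than the paper's sketch.
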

\begin{proof}
Fix $f\in V^{L}_{H}$. It is clear that we can choose a set of coefficients $\left\{c_{k}\right\}^{m}_{k=1},~c_{k}\in  H $ such that  
$f=\displaystyle\sum_{k=1}^{m}c_{k}f_{k}.$
Let $r:=\displaystyle\sum_{k=1}^{m}\left|c_{k}\right|.$
Since we want to minimize the $\ell^{1}$ norm of the coefficients, we can now restrict our search for a minimizer to sequences $\left\{d_{k}\right\}^{m}_{k=1}$ belonging to the compact set
\begin{equation}\label{eq50}
M:=\left\{~\left\{d_{k}\right\}^{m}_{k=1}\in H^{m}~ :~\left|d_{k}\right|\leq r,~k=1,2,...,m.\right\}
\end{equation}
Now,
\begin{equation}\label{eq51}
\left\{~\left\{d_{k}\right\}^{m}_{k=1}\in M~|f=\sum_{k=1}^{m}d_{k}f_{k} \right\}~\mbox{~~ is compact.}	
\end{equation}
Define a function
\begin{equation}\label{eq52}
	\varphi :H^{m}\longrightarrow \mathbb{R},~ \varphi\left\{d_{k}\right\}^{m}_{k=1}:=\displaystyle\sum_{k=1}^{m}\left|d_{k}\right|.
\end{equation}
We can prove $\varphi$ is continuous by similar proof of proposition  (\ref{P2} ).
From  (\ref{eq50} ) and lemma  (\ref{l2} ),
\begin{equation}\label{eq53}
\sum_{k=1}^{m}\left|d_{k}\right|=\inf\left\{\sum_{k=1}^{m}\left|c_{k}\right|:~f=\sum_{k=1}^{m}c_{k}f_{k}\right\}.	
\end{equation}
Hence for given $f\in V^{L}_{H},$ there exist coefficients $\left\{d_{k}\right\}^{m}_{k=1}\in H^{m}$ such that $f=\displaystyle\sum_{k=1}^{m}d_{k}f_{k},$ and 
\begin{equation}\label{eq54}
\sum_{k=1}^{m}\left|d_{k}\right|=\inf\left\{\sum_{k=1}^{m}\left|c_{k}\right|:~f=\sum_{k=1}^{m}c_{k}f_{k}\right\}.
\end{equation}
\end{proof}

Let $W=\text{\textit{left span}}\left\{f_{k}\right\}^{m}_{k=1}$, then in view of Proposition (\ref{P2} ), the set of vectors $\left\{f_{k}\right\}^{m}_{k=1}$ is a frame for $W$. If $W\neq V^{L}_{H},$ then using the frame decomposition of the frame $\left\{f_{k}\right\}^{m}_{k=1}$ one can obtain useful expression for the orthogonal projection onto the subspace $W$.

 \begin{theorem}\label{T3}
Let $\left\{f_{k}\right\}^{m}_{k=1}$ be a frame for a subspace $W$ of the left quaternion Hilbert space  $V^{L}_{H}$. Then the orthogonal projection of   $V^{L}_{H}$ onto $W$ is given by 
\begin{equation}\label{eq55}
Pf=\sum_{k=1}^{m}\left\langle f|S^{-1}f_{k}\right\rangle f_{k}.	
\end{equation}
\end{theorem}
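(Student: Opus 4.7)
The plan is to verify that the formula defines the orthogonal projection by checking its action on the orthogonal decomposition $V_H^L = W\oplus W^\perp$: the operator should act as the identity on $W$ and annihilate $W^\perp$. The containment $Pf\in W$ is automatic, since each summand is a left-scalar multiple of $f_k\in W$.

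First I would establish (or cite from \cite{Ad}) the orthogonal decomposition $V_H^L=W\oplus W^\perp$. The subspace $W=\text{leftspan}\{f_k\}_{k=1}^m$ is finite-dimensional and the Gram-Schmidt process carries over verbatim to the left quaternion Hilbert space setting, so every $f\in V_H^L$ admits a unique decomposition $f=f_W+f_\perp$ with $f_W\in W$ and $f_\perp\in W^\perp$.

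Next I would exploit the fact that the frame operator $S$ of $\{f_k\}_{k=1}^m$ is a bijective self-adjoint operator on $W$ by Theorem \ref{T1}(1), so $S^{-1}f_k\in W$ for each $k$. Because the inner product is additive in the first slot (a consequence of properties (a) and (c) of Definition \ref{D1}), I would split
\[
\langle f\vert S^{-1}f_k\rangle = \langle f_W\vert S^{-1}f_k\rangle + \langle f_\perp\vert S^{-1}f_k\rangle,
\]
and observe that the second term vanishes since $f_\perp\in W^\perp$ while $S^{-1}f_k\in W$. This reduces $Pf$ to $\sum_{k=1}^m\langle f_W\vert S^{-1}f_k\rangle f_k$, and applying Theorem \ref{T1}(2) to the frame $\{f_k\}_{k=1}^m$ of the Hilbert space $W$, with the vector $f_W\in W$, identifies this sum with $f_W$ itself. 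Hence $Pf=f_W$, which is by definition the orthogonal projection of $f$ onto $W$.

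The main obstacle is really only the orthogonal-decomposition step, because the paper has not formally developed the theory of orthogonal complements of subspaces of a left quaternion Hilbert space; I would either spend a few lines adapting the standard Gram-Schmidt argument to the quaternionic setting or defer to \cite{Ad}. Once that is granted, the remainder of the proof is a short computation that re-uses Theorem \ref{T1}(2) together with the conjugation rule $\overline{\langle f\vert g\rangle}=\langle g\vert f\rangle$ established in Definition \ref{D1}.
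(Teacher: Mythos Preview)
Your proof is correct and follows essentially the same route as the paper: both arguments use that $S^{-1}f_k\in W$ (from Theorem~\ref{T1}) to show that the contribution from $W^\perp$ vanishes, and invoke Theorem~\ref{T1}(2) to see that $P$ restricts to the identity on $W$. The only difference is presentational---you decompose a general $f=f_W+f_\perp$ and treat both pieces at once, while the paper checks $Pf=f$ for $f\in W$ and $Pf=0$ for $f\in W^\perp$ separately---and your explicit acknowledgment of the orthogonal decomposition $V_H^L=W\oplus W^\perp$ is a point the paper uses implicitly without justification.
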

\begin{proof}
Define an operator $P$ from $V^{L}_{H}$ to $W$ by 
\begin{equation}\label{eq56}
P:V^{L}_{H}\longrightarrow W\mbox{~~by~~} ~Pf=\sum_{k=1}^{m}\left\langle f|S^{-1}f_{k}\right\rangle f_{k},\mbox{~for all ~}f\in V^{L}_{H}.	
\end{equation}
First let us prove that $P$ is onto. For,
let $f_{1}\in W $ and $S:W\longrightarrow W$ be a frame operator in $W.$
Since $\left\{f_{k}\right\}^{m}_{k=1}$ be a frame for the subspace $W,$ we have
\begin{eqnarray*}
f_{1}=\sum_{k=1}^{m}\left\langle f_{1}|S^{-1}f_{k}\right\rangle f_{k}
\end{eqnarray*}
But $W\subseteq V^{L}_{H},$ thereby $f_{1}\in V^{L}_{H}.$
Since $f_{1}$ is arbitrary,
for given $f\in W$, there exists $g\in V^{L}_{H}$ such that $Pg=f.$
Thereby $P$ is onto.
Now we want to prove that $P$ is an orthogonal projection.
Hence our claims are 
\begin{enumerate}
	\item[(i)] $Pf=f,~$for $~f\in W.~$
	\item[(ii)] $Pf=0,~$for $~f\in W^{\bot.}~$
\end{enumerate}
For,
\begin{enumerate}
	\item[(i)] The mapping $P:V^{L}_{H}\stackrel{onto}{\longrightarrow}W$ is given by 
\begin{equation}\label{eq57}
	Pf=\sum_{k=1}^{m}\left\langle f|S^{-1}f_{k}\right\rangle f_{k}.
\end{equation}
Since $\left\{f_{k}\right\}^{m}_{k=1}$ be a frame for a subspace $W$ of the left quaternion Hilbert space  $V^{L}_{H},$ from  (\ref{eq15} ) the frame operator $S$ is given by
\begin{equation}\label{eq58}
S:W\longrightarrow W,\mbox{~~by~}Sf=\displaystyle\sum_{k=1}^{m}\left\langle f|f_{k}\right\rangle f_{k},\mbox{~\text{for all}~}	f\in W.
\end{equation}
From the Theorem  (\ref{T1} ), every $f\in W,$ can be represented as
\begin{equation}\label{eq59}
	f=\sum_{k=1}^{m}\left\langle f|S^{-1}f_{k}\right\rangle f_{k}.
\end{equation}
From  (\ref{eq57} )and (\ref{eq59} ),
\begin{equation}\label{eq60}
	Pf=f,~\text{for all ~}f\in W.
\end{equation}

	\item[(ii)] Let  $f\in W^{\bot}.$
	The mapping $P:V^{L}_{H}\stackrel{onto}{\longrightarrow}W$ is given by 
\begin{equation}\label{eq61}
	Pf=\sum_{k=1}^{m}\left\langle f|S^{-1}f_{k}\right\rangle f_{k}.
\end{equation}
From the Theorem  (\ref{T1} ), the frame operator $S:W\longrightarrow W$ is bijective.
Hence the range of $S^{-1}$ is  $W.$
That is,
$$~S^{-1}:W\longrightarrow W.~$$
So $S^{-1}f_{k}=g$ for some $g\in W.$
Hence  (\ref{eq61}) gives
\begin{eqnarray*}
Pf&=&\sum_{k=1}^{m}\left\langle f|S^{-1}f_{k}\right\rangle f_{k}\\
&=&\sum_{k=1}^{m}\left\langle f|g\right\rangle f_{k},~\mbox{~for some~}g\in W\\
&=&0~\mbox{~as~}g\in W\mbox{~and~}f\in W^{\bot}.
\end{eqnarray*}
Therefore 
\begin{equation}\label{eq62}
	Pf=0~\mbox{~~\text{for all}~}f\in W^{\bot}.
\end{equation}
\end{enumerate}
From (\ref{eq60}) and (\ref{eq62}), $P$ is an orthogonal projection.
\end{proof}

\begin{definition}\label{D6}
The numbers
\begin{equation}\label{eq63}
	\left\langle f|S^{-1}f_{k}\right\rangle,~k=1,\cdots,m
\end{equation}
are called \textit{frame coefficients.}
The frame $\left\{S^{-1}f_{k}\right\}_{k=1}^{m}$ is called the \textit{canonical dual} of $\left\{f_{k}\right\}_{k=1}^{m}.$
\end{definition}



\end{document}